\documentclass[11pt]{article}

\usepackage{fullpage}
\usepackage{amsmath,amsfonts,amsthm,cite}
\usepackage[unicode]{hyperref}
\urlstyle{same}
\usepackage{subcaption}
\usepackage{svg}
\usepackage{xspace}
\usepackage{xargs}
\usepackage{graphicx}
\usepackage{overpic}
\usepackage[labelfont=bf,font=small]{caption}
\usepackage[bottom]{footmisc}
\usepackage{tikz}
\usepackage{colortbl}
\usepackage{wrapfig}
\usepackage{enumitem}

\graphicspath{{./svgtiler/},{./figures/}}

\usepackage[colorinlistoftodos]{todonotes}

\newtheorem{theorem}{Theorem}[section]
\newtheorem{lemma}[theorem]{Lemma}

\theoremstyle{definition}
\newtheorem{definition}[theorem]{Definition}

\newcommandx{\Pull}{\textsc{Pull}\xspace}
\newcommandx{\Pullk}[2][1=$k$, 2=?]{\textsc{Pull{#2}}\text{-}#1\xspace}
\newcommandx{\PullkF}[2][1=$k$, 2=?]{\textsc{Pull{#2}}\text{-}#1\text{F}\xspace}
\newcommandx{\PullkFG}[2][1=$k$, 2=?]{\textsc{Pull{#2}}\text{-}#1\text{FG}\xspace}
\newcommandx{\PullkW}[2][1=$k$, 2=?]{\textsc{Pull{#2}}\text{-}#1\text{W}\xspace}
\newcommandx{\PullkWG}[2][1=$k$, 2=?]{\textsc{Pull{#2}}\text{-}#1\text{WG}\xspace}
\newcommandx{\PushPull}{\textsc{PushPull}\xspace}
\newcommandx{\PullPull}{\textsc{PullPull}\xspace}

\newcommand{\NP}{\textsc{NP}\xspace}
\newcommand{\PSPACE}{\textsc{PSPACE}\xspace}

\newcommand{\cIn}[0]{c_\text{in}}
\newcommand{\cOut}[0]{c_\text{out}}
\newcommand{\cIni}[1]{c_{\text{in},#1}}
\newcommand{\cOuti}[1]{c_{\text{out},#1}}
\newcommand{\pIni}[1]{c'_{\text{in},#1}}
\newcommand{\pOuti}[1]{c'_{\text{out},#1}}
\newcommand{\CHX}[0]{\text{CHX}}
\newcommand{\MSC}[0]{\text{MSC}}
\newcommand{\SO}[0]{\text{SO}}

\newcommand{\SD}[0]{\text{SD}}
\newcommand{\SX}[0]{\text{SX}}
\newcommand{\WCX}[0]{\text{WCX}}
\newcommand{\PostSelect}[1]{\textsf{Postselect}(#1)}
\newcommand{\hallway}[0]{\textsf{hallway}}

{\makeatletter \gdef\fps@figure{!htbp}}

\let\realbfseries=\bfseries
\def\bfseries{\realbfseries\boldmath}

\def\emph#1{\textbf{\textit{\boldmath #1}}}

\newif\ifabstract
\abstracttrue
\newif\iffull
\ifabstract \fullfalse \else \fulltrue \fi

\ifabstract

\else

\fi

\newcounter{section-preserve}
\newcounter{theorem-preserve}
\newcommand{\blank}[1]{}
\newtoks\magicAppendix
\magicAppendix={}
\newtoks\magictoks
\newif\iflater
\laterfalse
\long\def\later#1{\iflater#1\else\magictoks={#1}	\edef\magictodo{\noexpand\magicAppendix={\the\magicAppendix \par
			\the\magictoks	}}
	\magictodo\fi}
\long\def\both#1{\iflater#1\else\magictoks={#1}	\edef\magictodo{\noexpand\magicAppendix={\the\magicAppendix \par
			\noexpand\setcounter{theorem-preserve}{\noexpand\arabic{theorem}}			\noexpand\setcounter{theorem}{\arabic{theorem}}			\noexpand\setcounter{section-preserve}{\noexpand\arabic{section}}			\noexpand\setcounter{section}{\arabic{section}}			\noexpand\let\noexpand\oldsection=\noexpand\thesection
			\noexpand\def\noexpand\thesection{\thesection}
			\noexpand\let\noexpand\oldlabel=\noexpand\label
			\noexpand\let\noexpand\label=\noexpand\blank
			\the\magictoks			\noexpand\setcounter{theorem}{\noexpand\arabic{theorem-preserve}}			\noexpand\setcounter{section}{\noexpand\arabic{section-preserve}}			\noexpand\let\noexpand\thesection=\noexpand\oldsection
			\noexpand\let\noexpand\label=\noexpand\oldlabel
	}}
	\magictodo
	\the\magictoks\fi}
\def\magicappendix{\latertrue \the\magicAppendix}

{\makeatletter
	\gdef\xxxmark{		\expandafter\ifx\csname @mpargs\endcsname\relax 		\expandafter\ifx\csname @captype\endcsname\relax 		\marginpar{xxx}		\else
		xxx 		\fi
		\else
		xxx 		\fi}
	\gdef\xxx{\@ifnextchar[\xxx@lab\xxx@nolab}
	\long\gdef\xxx@lab[#1]#2{\textbf{[\xxxmark #2 ---{\sc #1}]}}
	\long\gdef\xxx@nolab#1{\textbf{[\xxxmark #1]}}
	}

\title{Pushing Blocks via Checkable Gadgets: \\ \PSPACE-completeness of Push-1F and Block/Box Dude}
\author{  Hayashi Ani    \thanks{Massachusetts Institute of Technology, Cambridge, MA, USA}
\and
  Lily Chung    \footnotemark[1]
\and
  Erik D. Demaine    \footnotemark[1]
\and
  Jenny Diomidova    \footnotemark[1]
\and
  Della Hendrickson    \footnotemark[1]
\and
  Jayson Lynch    \footnotemark[1]
}
\date{}

\begin{document}

\maketitle

\begin{abstract}
We prove \PSPACE-completeness of the well-studied pushing-block puzzle
Push-1F, a theoretical abstraction of many video games (introduced in 1999).
The proof also extends to Push-$k$ for any $k \ge 2$.
We also prove \PSPACE-completeness of two versions of the recently studied
block-moving puzzle game with gravity, Block Dude --- a video game dating back to 1994 ---
featuring either liftable blocks or pushable blocks.
Two of our reductions are built on a new framework for ``checkable'' gadgets,
extending the motion-planning-through-gadgets framework to support gadgets
that can be misused, provided those misuses can be detected later.
\end{abstract}

\section{Introduction}

In the \emph{Push} family of pushing-block puzzles,
introduced by O'Rourke and The Smith Problem Solving Group
in 1999 \cite{O'Rourke99},
a $1 \times 1$ agent must traverse a unit-square grid,
some cells of which have a ``block'',
from a given start location to a given target location.
Refer to Figure~\ref{fig:Push-1F}.
In \emph{Push-$k$} \cite{Push100,demaine2003pushing},
the agent's move (horizontal or vertical by one square)
can \emph{push} up to $k$ consecutive blocks by one square,
provided that there is an empty square on the other side.
In \emph{Push-$*$}
(described in \cite{Dhagat-O'Rourke-1992,O'Rourke99}
and analyzed in
\cite{bremner1994motion,hoffmann-2000-pushstar,demaine2003pushing}),
there is no limit to the number of consecutive blocks that can be pushed.
In the \emph{-F} variation (described in
\cite{Dhagat-O'Rourke-1992,O'Rourke99,Push100} but first
given notation in \cite{demainepush2F}), some of the blocks are \emph{fixed}
in the grid, meaning they cannot be traversed or pushed by the agent
or other blocks.
Push-1F has the same allowed moves as the famous \emph{Sokoban}
puzzle video game, invented in 1982 and analyzed at FUN 1998 \cite{sokoban},
but crucially Push-1F's goal is for the agent to reach a target location,
which is much simpler than Sokoban's ``storage'' goal where the blocks must be pushed to certain locations.

\begin{figure}
  \centering
  \subcaptionbox{}{\includegraphics[width=0.16\linewidth]{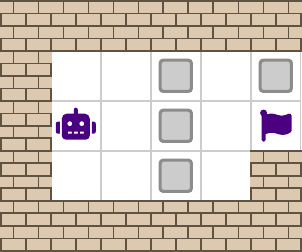}}\hfill
  \subcaptionbox{}{\includegraphics[width=0.16\linewidth]{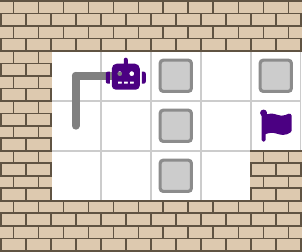}}\hfill
  \subcaptionbox{}{\includegraphics[width=0.16\linewidth]{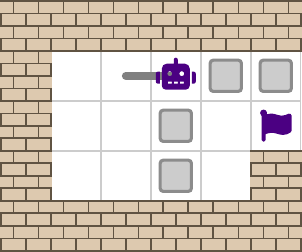}}\hfill
  \subcaptionbox{}{\includegraphics[width=0.16\linewidth]{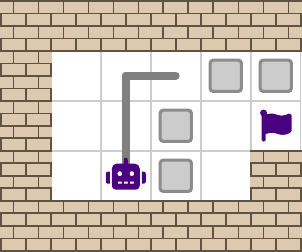}}\hfill
  \subcaptionbox{}{\includegraphics[width=0.16\linewidth]{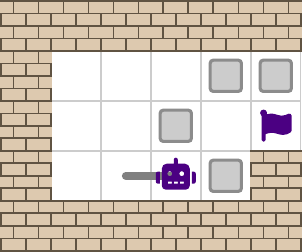}}\hfill
  \subcaptionbox{}{\includegraphics[width=0.16\linewidth]{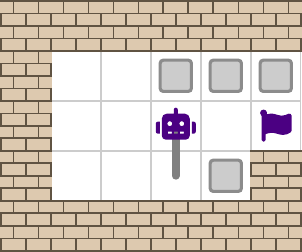}}
  \caption{Sample Push-1F puzzle and solution sequence.
    In steps (c) and (e), for example, the\ agent cannot push right again.
    The agent is drawn as a robot head;
    the traversed path between steps is drawn as a gray line;
    pushable blocks are drawn as boxes;
    fixed blocks are drawn as brick walls; and
    the goal location is drawn as a flag.
    \textsl{Robot and flag icons from Font Awesome under CC BY 4.0 License.}
  }
  \label{fig:Push-1F}
\end{figure}

In this paper, we prove that Push-1F is \PSPACE-complete,
settling an open problem from \cite{Push100,demainepush2F},
and complementing previous results showing \PSPACE-hardness for Push-$k$F for $k \geq 2$
over 20 years ago \cite{demainepush2F}
and \PSPACE-hardness for Push-$*$F 30 years ago
\cite{bremner1994motion}.

To gain some intuition about why Push-1F is so difficult to prove
\PSPACE-hard, and how we surmount that difficulty, consider the
attempt at a ``diode'' gadget in Figure~\ref{fig:broken diode}.
The goal of this gadget is to allow repeated traversals from the
left entrance to the right (as in Figure~\ref{fig:broken diode:forward}),
while always preventing ``backward'' traversal from the right to the left
(as in Figure~\ref{fig:broken diode:backward}).
But given the opportunity for forward traversal, the agent can instead
``break'' the gadget to allow future forward and backward traversal
(as in Figure~\ref{fig:broken diode:break}).

\begin{figure}
  \centering
  \subcaptionbox{Gadget}
    {\includegraphics[width=0.18\linewidth]{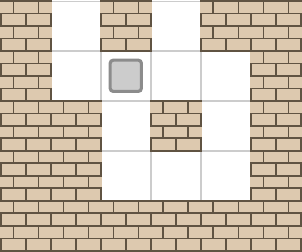}}\hfill
  \subcaptionbox{\label{fig:broken diode:forward}Intended forward traversal}{    \includegraphics[width=0.18\linewidth]{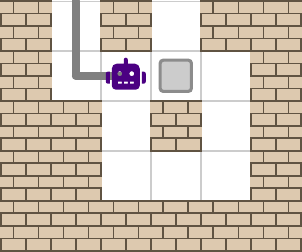}
    \includegraphics[width=0.18\linewidth]{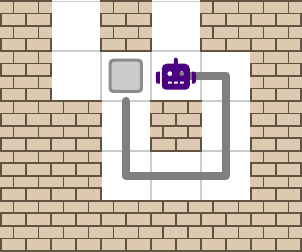}  }\hfill
  \subcaptionbox{\label{fig:broken diode:backward}Backward traversal impossible}
    {\includegraphics[width=0.18\linewidth]{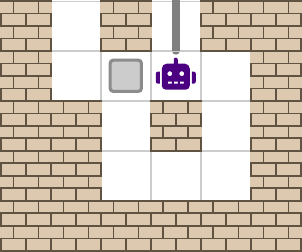}}\hfill
  \subcaptionbox{\label{fig:broken diode:break}Breaking the gadget}
    {\includegraphics[width=0.18\linewidth]{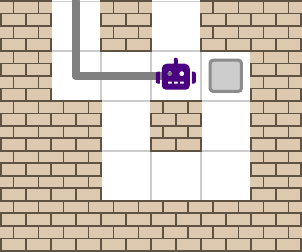}}
  \caption{A broken Push-1F diode gadget.}
  \label{fig:broken diode}
\end{figure}

To solve this problem, we introduce the idea of a \emph{checkable gadget}
where, after the agent completes the ``main'' gadget traversal puzzle,
the agent is forced (in order to solve the overall puzzle)
to do a specified sequence of \emph{checking} traversals of every gadget,
all of which must succeed in order to solve the overall puzzle.
If designed well, these checking traversals can detect whether a
gadget was previously ``broken'', and allow traversal only if not.
In the case of Figure~\ref{fig:broken diode}, one can think of the gadget as a four-location gadget (the top three rows) which has its bottom two locations connected.
This four-location gadget is ``checkable'':
we will demand that,
after completing the main puzzle,
the agent follows the two checking traversals shown in Figure~\ref{fig:checkable diode}.
In order for these checking traversals to both be possible,
the agent cannot push the block into either corner, preventing the agent
from breaking the gadget during the main gadget traversal puzzle.
We call this process of removing broken states from a gadget by demanding that the checking traversals remain legal \emph{postselection}.\footnote{In quantum computing, for example, ``postselection is the power of
  discarding all runs of a computation in which a given event does not occur''
  \cite{Aaronson-postselection}.  In probability theory, postselection is
  equivalent to conditioning on a particular event.}

\begin{figure}
  \centering
  \scriptsize    \vspace*{4ex}

  \subcaptionbox{Checkable gadget}{    \begin{overpic}[width=0.18\linewidth]{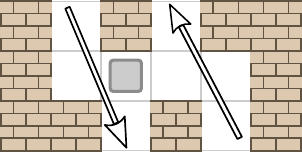}
      \put(25,65){\makebox(0,0){\strut check}}
      \put(25,55){\makebox(0,0){\strut 1 in}}
      \put(42.5,-6){\makebox(0,0){\strut check}}
      \put(42.5,-16){\makebox(0,0){\strut 1 out}}
      \put(57.5,65){\makebox(0,0){\strut check}}
      \put(57.5,55){\makebox(0,0){\strut 2 out}}
      \put(75,-6){\makebox(0,0){\strut check}}
      \put(75,-16){\makebox(0,0){\strut 2 in}}
    \end{overpic}
    \vspace*{4ex}
  }\hfill
  \subcaptionbox{Successful checks}{    \includegraphics[width=0.18\linewidth]{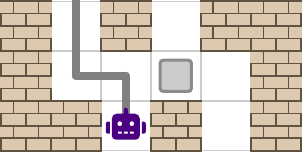}
    \includegraphics[width=0.18\linewidth]{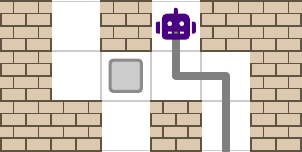}
    \vspace*{4ex}
  }\hfill
  \subcaptionbox{Failed checks}{    \includegraphics[width=0.18\linewidth]{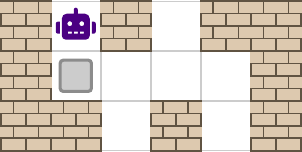}
    \includegraphics[width=0.18\linewidth]{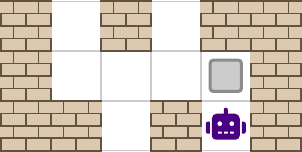}
    \vspace*{4ex}
  }
  \caption{The top three rows
    of the Push-1F diode gadget of Figure~\ref{fig:broken diode}, as a checkable gadget.
    The checking traversals are
    ``check 1 in $\rightarrow$ check 1 out'' and
    ``check 2 in $\rightarrow$ check 2 out'', denoted by the hollow arrows.}
  \label{fig:checkable diode}
\end{figure}

We develop a general framework of checkable gadgets that enable a reduction
to focus on the main gadget traversal puzzle, assuming all gadgets remain
unbroken (i.e., the checking traversals remain possible at the end),
while the framework ensures that the agent makes these checking traversals
at the end (without other unintended traversals).
This framework builds upon the motion-planning-through-gadgets framework
introduced at FUN 2018 \cite{demaine2018computational} and developed
further in \cite{demaine2018general,ani2020pspace,doors,lynch2020framework,hendrickson2021gadgets}
to handle checkable gadgets.

We show that the reduction to Push-1F can be applied to Push-$k$
(without fixed blocks) for any $k \ge 2$, by a simple change of gadgets.
This leaves Push-1 and Push-$*$ as the only open problems
in terms of whether we need fixed blocks for PSPACE-completeness.

\label{sec:Block Pushing Models}
We also apply the motion-planning-through-gadgets framework
to resolve the complexity of \emph{Block Dude},
a puzzle video game made over a dozen times on many platforms,
originally under the name ``Block-Man 1'' (Soleau Software, 1994); see
\cite{barr2021block} for details.
Barr, Chung, and Williams \cite{barr2021block} recently formalized this
game's mechanics, along with several variations,
and proved them all NP-hard.
In this paper, we prove PSPACE-completeness of three of these variations,
including the original video game mechanics:
\begin{enumerate}
\item \emph{BoxDude} is like Push-1 but where
  all pushable blocks and the agent experience gravity,
  falling straight down whenever they have blank spaces below them.
  In addition to moving horizontally left or right,
  the agent can ``climb'' on top of horizontally adjacent blocks
  (be they pushable or fixed), provided the square above the agent is empty.
  See Figure~\ref{fig:BoxDude}.

\begin{figure}
  \centering
  \subcaptionbox{Pushing}{    \includegraphics[width=0.15\linewidth]{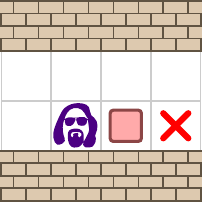}
    \includegraphics[width=0.15\linewidth]{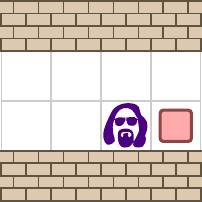}
  }\hfil
  \subcaptionbox{Climbing}{    \includegraphics[width=0.15\linewidth]{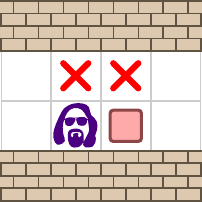}
    \includegraphics[width=0.15\linewidth]{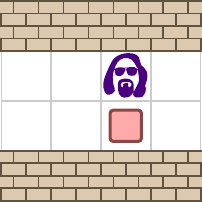}
  }
  \caption{Mechanics for BoxDude, with pushable boxes shown in red.
    Squares marked with a red $\times$ must be empty
    for the move to be possible.}
  \label{fig:BoxDude}
\end{figure}

\item In \emph{BlockDude} (as in the Block Dude video games),
  blocks cannot be pushed;
  instead, nonfixed blocks can be ``picked up'' by the agent from a
  horizontally adjacent position to the position immediately above the agent,
  provided that that position and the intermediate diagonal position are empty.
  See Figure~\ref{fig:BlockDude}.
  The agent can then carry one such block to another location
  (provided the ceiling offers height-2 clearance),
  and then drop the block in front of them, again
  provided that that position and the intermediate diagonal position are empty.    \footnote{A complication in some implementations of the game is that the
    agent can only pick up or drop the block in front of them, with the
    agent's orientation determined by their previous move.
    (Some implementations allow turning around in place.)
    This detail will not affect our results.}
    They can also stack the block on top of another block.
  If the agent tries to move past a low ceiling while carrying a block, the block will be dropped behind them.

\begin{figure}
  \centering
  \subcaptionbox{Lifting a block}{    \includegraphics[width=0.15\linewidth]{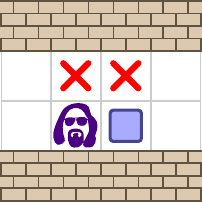}
    \includegraphics[width=0.15\linewidth]{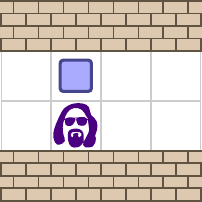}
  }\hfill
  \subcaptionbox{Carrying a lifted block}{    \includegraphics[width=0.15\linewidth]{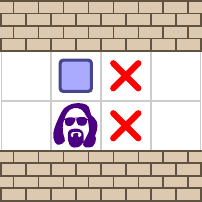}
    \includegraphics[width=0.15\linewidth]{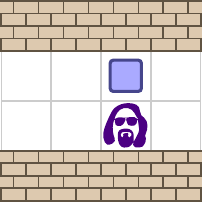}
  }\hfill
  \subcaptionbox{Low clearance when carrying}{    \includegraphics[width=0.15\linewidth]{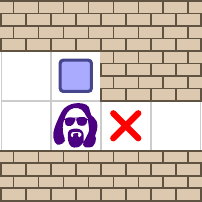}
    \includegraphics[width=0.15\linewidth]{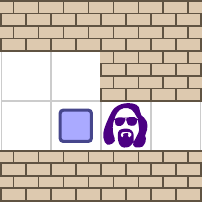}
  }

  \subcaptionbox{Climbing with a block}{    \includegraphics[width=0.15\linewidth]{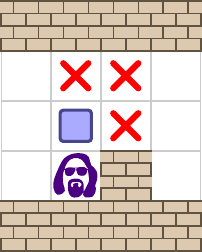}
    \includegraphics[width=0.15\linewidth]{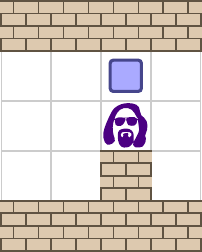}
  }\hfill
  \subcaptionbox{Dropping a lifted block}{    \includegraphics[width=0.15\linewidth]{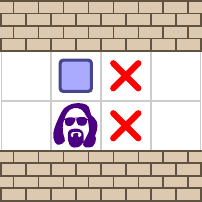}
    \includegraphics[width=0.15\linewidth]{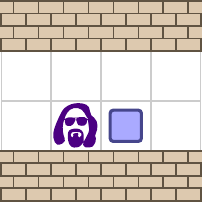}
  }\hfill
  \subcaptionbox{Stacking a lifted block}{    \includegraphics[width=0.15\linewidth]{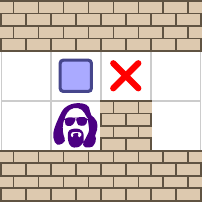}
    \includegraphics[width=0.15\linewidth]{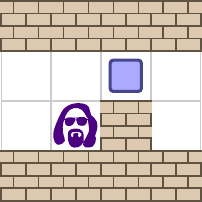}
  }
  \caption{Mechanics for BlockDude, with liftable blocks shown in blue.
    Squares marked with a red $\times$ must be empty
    for the following move to be possible.}
  \label{fig:BlockDude}
\end{figure}

\item In \emph{BloxDude}, nonfixed blocks can be pushed (as in BoxDude)
  and/or picked up (as in BlockDude).
\end{enumerate}
These proofs are simpler in that they do not require our checkable gadgets
framework.  Therefore we present them first, along with a proof that
Push-$*$F is PSPACE-hard, adapted from a 30-year-old unpublished
manuscript \cite{bremner1994motion} which we include here for completeness.

The other variations described in \cite{barr2021block}, called $\cdots$Duderino
instead of $\cdots$Dude, change the goal of a puzzle to place the $k$ nonfixed
blocks into $k$ specified storage locations, as in Sokoban.
We leave open the complexity of BoxDuderino, BlockDuderino, and BloxDuderino.

All of the games we consider can easily be simulated in polynomial space,
and thus are in NPSPACE $=$ PSPACE by Savitch's Theorem.
Proving PSPACE-hardness is much more complicated, and is the goal of this paper.

The rest of this paper is organized as follows.
In Section~\ref{sec:gadget model}, we review the
motion-planning-through-gadgets framework.
In Section~\ref{sec:Push-*F}, we prove that Push-$*$F is \PSPACE-complete
using a reduction from motion-planning-through-gadgets
using gadgets from an old unpublished proof \cite{bremner1994motion}.
In Section~\ref{sec:BlockDude}, we prove that BlockDude and BloxDude
are \PSPACE-complete
using standard reductions from motion-planning-through-gadgets.
In Section~\ref{sec:checkable gadget framework}, we develop our checkable gadget
framework.
In Section~\ref{sec:BoxDude}, we prove that BoxDude is \PSPACE-complete
using our checkable gadget framework. 
In Section~\ref{sec:Push-1F}, we prove that Push-1F is \PSPACE-complete
via a much more involved application of our checkable gadget framework.
In Section~\ref{sec:Push-k}, we prove that Push-$k$ is \PSPACE-complete
for any $k \ge 2$, similar to Section~\ref{sec:Push-1F} but replacing the gadgets.
We conclude with open problems in Section~\ref{sec:Open Problems}.

\section{Gadgets Framework}
\label{sec:gadget model}

The \emph{motion-planning-through-gadgets framework} is an abstract motion planning model used for proving computational hardness results.
Here we give the definitions and results we need for this paper;
see \cite{demaine2018general,lynch2020framework,hendrickson2021gadgets}
for more details.

A \emph{gadget} $G$ consists of
a finite set $Q(G)$ of \emph{states},
a finite set $L(G)$ of \emph{locations} (entrances/exits),
and a finite set $T(G)$ of \emph{transitions} of the form $(q,a) \to (r,b)$
where $q,r \in Q(G)$ are states and $a,b \in L(G)$ are locations.
The transition $(q,a) \to (r,b) \in T(G)$ means that an agent can \emph{traverse}
the gadget when it is in state $q$ by entering at location $a$ and exiting
at location~$b$ which changes the state of the gadget from $q$ to~$r$.
We use the notation $a \to b$ for a traversal by the agent that does
not specify the state of the gadget before or after the traversal.
A \emph{traversal sequence} $[a_1\to b_1,\dots,a_k\to b_k]$
on the locations $L(G)$ is \emph{legal} from state $s_0$
if there is a corresponding sequence of transitions
$[(a_1,s_0) \to (b_1,s_1), \dots, (a_k,s_{k-1}) \to (b_k,s_k)]$,
where each start state of each transition matches the end state
of the previous transition ($s_0$ for the first transition).
We define gadgets in figures using a \emph{state diagram} which gives,
for each state $q \in Q$,
a labeled directed multigraph $G_q = (L(G),E_q)$ on the locations,
where a directed edge $(a,b)$ with label $r$ represents the transition
$(q,a) \to (r,b) \in T(G)$.

Figure~\ref{fig:L2T} shows the state diagram of a key gadget
called the \emph{locking 2-toggle} \cite{demaine2018general}.
This gadget has four locations (drawn as dots) and three states $1,2,3$.
The central state, $2$, allows for two different transitions.
Each of those transitions takes the gadget to a different state,
from which the only transition returns the agent to the prior location
and returns the gadget to state~$3$.

\begin{figure}
		\centering	\includegraphics[scale=1]{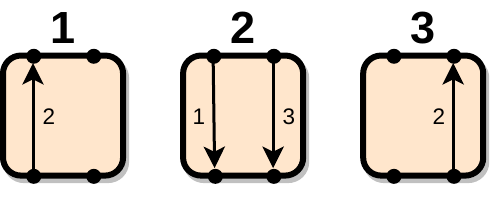}
	\caption{State diagram for the locking 2-toggle gadget. Each box represents the gadget in a different state, in this case labeled with the numbers $1,2,3$. Dots represent the four locations of the gadget. Arrows represent transitions in the gadget and are labeled with the states to which those transitions take the gadget. In state 2, the agent can traverse either tunnel going down, which blocks off both downward traversals until the agent reverses that traversal.}
	\label{fig:L2T}
	\end{figure}

A \emph{system of gadgets} $S$ consists of a set of gadgets,
an initial state for each gadget,
and a \emph{connection graph} on the gadgets' locations.
If two locations $a,b$ of two gadgets (possibly the same gadget) are connected
by a path in the connection graph, then an agent can traverse freely between
$a$ and~$b$ (outside the gadgets).\footnote{Equivalently, we can think of identifying locations $a$ and $b$
  topologically, thereby contracting the connected components of the
  connection graph.
  Alternatively, if we think of the gadgets as individual ``levels'',
  then the connection graph is like an ``overworld'' map
  connecting the levels together.}
We call edges of the connection graph \emph{hallways},
and for clarity in figures,
we add extra vertices to the connection graph called \emph{branching hallways},
which we can equivalently think of as a one-state gadget that has transitions
between all pairs of locations.
A \emph{system traversal} is a sequence of traversals
$a_1\to b_1,\dots,a_k\to b_k$, each on a potentially different gadget in~$S$,
where the connection graph has a path from $b_i$ to $a_{i+1}$ for each~$i$.
We write such a traversal as $a_1 \to^* b_k$,
ignoring the intermediate locations.
A system traversal is \emph{legal} if the restriction to traversals on
a single gadget $G$ is a legal traversal sequence from the initial state
of $G$ assigned by $S$, for every $G$ in~$S$.
Note that gadgets are ``local'' in the sense that traversing a gadget does
not change the state (and thus traversability) of any other gadgets.

The \emph{reachability} or \emph{1-player motion planning} problem
with a finite set of gadgets $\mathcal G$
asks whether there is a legal system traversal $s \to^* t$
from a given start location $s$ to a given goal location $t$
(by a single agent) in a given system of gadgets~$S$,
which contains only gadgets from $\mathcal G$.

Because we are working with 2D games, we also consider
\emph{planar motion planning}, where every gadget additionally has a specified
cyclic ordering of its vertices
and the system of gadgets is embedded in the plane without intersections.
More precisely, a system of gadgets is \emph{planar} if the following
construction produces a planar graph:
(1)~replace each gadget with a wheel graph,
which has a cycle of vertices corresponding to the locations on the gadget
in the appropriate order, and a central vertex connected to each location;
and (2)~connect locations on these wheels with edges
according to the connection graph.
In \emph{planar reachability},
we restrict to planar systems of gadgets.
Note that this definition allows rotations and reflections of gadgets,
but no other permutation of their locations.

\subsection{Simulation}

To define a notion of gadget simulation, we can think of a system of gadgets
as being characterized by its set of possible traversal sequences
(as formalized by the related \emph{gizmo} framework
of \cite{hendrickson2021gadgets}).

\begin{definition} \label{def:local simulation}
  A \emph{(local) simulation} of a gadget $G$ in state $q$ consists of
  a system $S$ of gadgets, together with an injective function $m$ mapping
  every location of $G$ to a distinct location in~$S$, such that
  a traversal sequence $[a_1\to b_1,\dots, a_k\to b_k]$
  on the locations in $G$ is legal from state $q$
  if and only if
  there exists a sequence of system traversals
  $m(a_1) \to^* m(b_1), \dots, m(a_k) \to^* m(b_k)$
  that is legal in the sense that
  the concatenation of the restrictions
  of the system traversals $m(a_i) \to^* m(b_i)$
  to traversals on a single gadget $G$
  is a legal traversal sequence for $G$ from the initial state of $G$
  assigned by~$S$, for every $G$ in~$S$.

  A \emph{planar simulation} of a gadget $G$ in state~$q$
  is a simulation $(S,m)$ where
  $S$ is furthermore a planar system of gadgets,
  and the cyclic order of locations of $G$ must map via $m$ to
  locations in cyclic order around the outside face of~$S$.

  A [planar] simulation of an entire gadget $G$ consists of
  a [planar] simulation of $G$ in state $q$, for all states $q \in Q(G)$,
  that differ only in their assignments of initial states.
  A finite set $\mathcal G$ of gadgets \emph{[planarly] simulates}
  a gadget $G$ if there is a [planar] simulation of $G$
  using only gadgets in $\mathcal G$.
\end{definition}

These definitions of simulation imply that,
if we take a larger system of gadgets
and replace each instance of gadget $G$ with the system~$S$
using the appropriate initial states
(matching up locations that correspond via~$m$),
then the entire system behaves equivalently.
In particular, this substitution preserves reachability of locations from one another.
Furthermore, if the larger system and the simulation are both planar,
then the full resulting system is planar. More formally:

\begin{lemma}\label{lem:simulation reduction}
Let $H$ be a gadget,
and let $\mathcal G$ and $\mathcal G'$ be finite sets of gadgets.
If $\mathcal G$ [planarly] simulates $H$,
then there is a polynomial-time reduction\footnote{Throughout this paper,
  reductions are \emph{many-one}/\emph{Karp}:
  a reduction from $A$ to $B$ maps an instance of $A$ to an equivalent
  (in terms of decision outcome) instance of $B$.}
from [planar] reachability with $\{H\} \cup \mathcal G'$
to [planar] reachability with $\mathcal G\cup\mathcal G'$.
\end{lemma}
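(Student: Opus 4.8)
The plan is to argue directly by substitution. Given an instance of [planar] reachability with $\{H\}\cup\mathcal G'$ --- that is, a system $S_0$ of gadgets drawn from $\{H\}\cup\mathcal G'$ with start location $s$ and goal location $t$ --- I would produce an equivalent instance of [planar] reachability with $\mathcal G\cup\mathcal G'$ by replacing every copy of $H$ in $S_0$ with a fresh copy of the simulating system guaranteed by the hypothesis. Concretely, fix a [planar] simulation $(S,m)$ of the entire gadget $H$; for each copy $H_i$ of $H$ occurring in $S_0$, in its assigned initial state $q_i$, instantiate a disjoint copy $S_i$ of $S$ using the initial-state assignment for $q_i$, and splice it into the connection graph of $S_0$ by contracting each location $\ell$ of $H_i$ with the corresponding location $m(\ell)$ of $S_i$ (equivalently, adding a connection-graph edge between them). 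The copies of gadgets in $\mathcal G'$ are left untouched. The start and goal locations $s,t$ are unchanged (if $s$ or $t$ happened to be a location of some $H_i$, use the corresponding $m$-image instead). This construction is clearly polynomial time: $|S|$ is a fixed constant depending only on $H$ and $\mathcal G$, so the new system has size $O(|S_0|)$.

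The correctness argument is where the definition of simulation does the work. First I would handle planarity: by the planar case of Lemma hypothesis, each $S_i$ is a planar system whose outer face presents the images $m(\ell)$ in the same cyclic order as the locations $\ell$ of $H_i$; since a planar embedding of $S_0$ places the wheel for $H_i$ with its rim in that same cyclic order, we may delete the wheel's hub and rim and glue in the planar embedding of $S_i$ inside that face, preserving planarity of the whole. Second, the reachability equivalence: I would show that a legal system traversal $s\to^* t$ exists in $S_0$ if and only if one exists in the new system. This reduces, gadget-copy by gadget-copy, to the ``if and only if'' in Definition~\ref{def:local simulation}. In the forward direction, take a legal system traversal in $S_0$; restrict it to each copy $H_i$ to get a legal traversal sequence from $q_i$; apply the definition to replace that sequence with a legal sequence of system traversals through $S_i$; and interleave these back into the ambient route exactly where the $H_i$-traversals occurred (the connection-graph paths of $S_0$ still connect consecutive pieces because we only contracted locations, never deleted connectivity). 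The reverse direction is symmetric: a legal traversal in the new system, restricted to each $S_i$, is a legal sequence of system traversals of the simulating system, which the definition converts back into a legal traversal sequence of $H_i$ from $q_i$; splicing these in yields a legal system traversal of $S_0$.

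The main subtlety --- and the step I'd be most careful about --- is the bookkeeping that the traversals of distinct copies $H_i$, $H_j$ and of the $\mathcal G'$ gadgets do not interact, so that legality can be checked ``locally'' and copy-by-copy. This is exactly the locality remark stated after the definition of system traversal (``traversing a gadget does not change the state of any other gadgets''): the state of the whole system factors as a product over its constituent gadgets, and a system traversal's restriction to any sub-collection of gadgets is again a well-defined traversal sequence for that sub-collection. Using this, the global equivalence follows from applying the single-gadget ``iff'' of Definition~\ref{def:local simulation} once per copy of $H$, with the $\mathcal G'$-part of every traversal carried through verbatim. One also has to note that the definition's ``iff'' is stated for an arbitrary (finite) traversal sequence, which is what lets us apply it to the full list of $H_i$-traversals extracted from one side's solution, in order; and that the $m_i$ are injective with disjoint images across different copies, so the splicing is unambiguous. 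Everything else --- polynomial running time, the decision outcome being preserved, and (in the planar case) the final system being planar --- is then routine.
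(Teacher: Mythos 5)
Your proposal is correct and takes essentially the same approach as the paper, which states this lemma as a direct consequence of the substitution argument sketched in the preceding paragraph (replace each copy of $H$ by the simulating system, identify locations via $m$, and invoke the ``if and only if'' of Definition~\ref{def:local simulation} copy-by-copy). The paper does not spell out the details; your write-up, including the planarity gluing and the decomposition of excursions through each simulating subsystem, is a faithful elaboration of the intended argument.
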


\subsection{Known Hardness Results}

We can now formally state the problems we will reduce from in this paper.

In Section~\ref{sec:BlockDude}, we use the locking 2-toggle to show PSPACE-completeness of BlockDude puzzles.
\begin{theorem} \label{thm:L2T_PSPACE}
  {\rm \cite[Theorem 10]{demaine2018general}}
	Planar reachability with any interacting-$k$-tunnel reversible deterministic gadget is PSPACE-complete.
\end{theorem}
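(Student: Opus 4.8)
PSPACE membership is the easy direction: a configuration of a system of gadgets consists of the agent's current location together with the state of every gadget, which takes polynomial space, so reachability lies in \NPSPACE, hence in \PSPACE by Savitch's theorem. The rest is \PSPACE-hardness, and I would organize it in two stages: (i) show that planar reachability with one fixed gadget from the class --- the locking 2-toggle of Figure~\ref{fig:L2T} --- is \PSPACE-hard; and (ii) show that \emph{every} interacting-$k$-tunnel reversible deterministic gadget planarly simulates the locking 2-toggle. Feeding (ii) into Lemma~\ref{lem:simulation reduction} promotes (i) to \PSPACE-hardness of planar reachability for the entire class, which together with membership gives completeness.

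For stage (i), the plan is to reduce from a reversible \PSPACE-complete reconfiguration problem; \NCL is the natural choice, both because it has a planar variant and because its reversibility matches the reversibility of the locking 2-toggle, so that a legal back-and-forth agent walk should correspond exactly to a legal sequence of \NCL edge reorientations. Concretely I would represent the orientation of each \NCL edge by the state of a small block of locking 2-toggles, realize \NCL \textsc{and}- and \textsc{or}-vertices as sub-systems whose ``locking'' behavior enforces the minimum-in-flow constraint at each vertex, lay everything out in the plane using the planarity of the input graph, and attach a final probe path that is traversable exactly when the designated target edge has been reversed. (An alternative base case is to reduce from reachability with a door / self-closing-door gadget already known \PSPACE-hard~\cite{demaine2018computational,doors}, after first simulating such a gadget with locking 2-toggles; either route suffices.) Either way, the nontrivial content is arguing that no unintended agent walk can violate the encoded constraints.

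Stage (ii) is the crux. First the case $k = 2$: a reversible deterministic $2$-tunnel gadget assigns to each state a partial matching of its four locations respecting the two tunnels, reversibility makes the state-transition graph undirected, and ``interacting'' rules out the gadget being a disjoint product of two $1$-tunnel gadgets. This leaves a finite list of cases, and for each I would exhibit an explicit planar simulation of the locking 2-toggle --- typically by chaining a few copies of the gadget, identifying some locations via hallways, and inserting branching hallways --- and verify it by comparing legal traversal sequences. (If the case analysis more naturally produces a different small hard gadget, such as a non-locking $2$-toggle, one either shows it simulates the locking 2-toggle or re-runs the stage-(i) reduction for it directly.) Then reduce general $k$ to $k = 2$: fix two tunnels $i, j$ and a state witnessing their interaction, and ``neutralize'' every other tunnel $\ell$ by wiring its two endpoints together with a hallway, so that any traversal of $\ell$ returns the agent to its starting point and is a no-op for reachability; reversibility is precisely what lets one argue that the state changes these neutral traversals induce cannot be leveraged to reach anything not already reachable in the induced $2$-tunnel behavior. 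Throughout, one must check that the neutralization and all the gadget-chaining keep the system planar, which holds because the locking 2-toggle is a planar gadget and all the wirings used are planar.

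I expect the main obstacle to be stage (ii), and within it the finite case analysis for $2$-tunnel gadgets: each case demands its own explicit simulating construction, and each construction must be shown to have \emph{exactly} the locking 2-toggle's set of legal traversal sequences --- neither more nor fewer --- while staying planar. The $k \to 2$ reduction is conceptually simple but still needs care to exclude spurious behavior routed through the neutralized tunnels, which is where reversibility and the tunnel structure are used in an essential way.
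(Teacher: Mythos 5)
This statement is imported verbatim from \cite{demaine2018general}; the present paper gives no proof of it, so there is no internal argument to compare against. Your two-stage plan --- (i) \PSPACE-hardness of planar reachability with the locking 2-toggle of Figure~\ref{fig:L2T} via a reduction from planar \NCL, and (ii) a simulation lemma showing every interacting-$k$-tunnel reversible deterministic gadget planarly simulates the locking 2-toggle (by a finite case analysis for $k=2$ plus a reduction of general $k$ to that case), combined through Lemma~\ref{lem:simulation reduction} --- is essentially the strategy of the cited proof, so your reconstruction is on target; the remaining work is exactly the explicit vertex gadgets and the exhaustive $2$-tunnel case analysis you flag as the crux.
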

The locking 2-toggle is an example of an interacting-$k$-tunnel reversible deterministic gadget \cite{demaine2018general} and thus we obtain PSPACE-completeness of planar reachability with the locking 2-toggle. We recommend readers interested in this more general dichotomy to refer to \cite{demaine2018general}.

\begin{figure}
  \centering
  \includegraphics{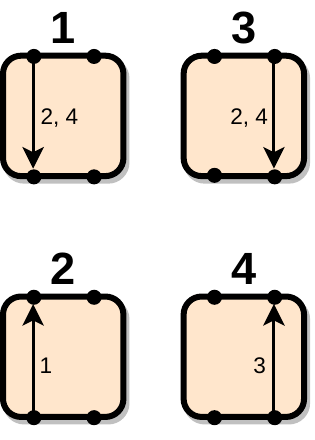}
  \caption{State diagram for a nondeterministic locking 2-toggle.  From state 1, the left tunnel can be traversed so as to leave the gadget in either state 2 or state 4.  Formally, in the multigraph for state 1 there are two different edges, one labeled 2 and the other labeled 4.}
  \label{fig: nL2T}
\end{figure}

We also use the nondeterministic locking 2-toggle shown in Figure~\ref{fig: nL2T}. This is used in Section~\ref{sec:BoxDude} to show PSPACE-completeness of BoxDude puzzles. Its behavior resembles that of the locking 2-toggle, but because it is not deterministic it is not covered by the prior theorem.
\begin{theorem} \label{thm:nL2T_PSPACE}
  {\rm \cite[Theorem 3.1]{ani2020pspace}}
  Planar reachability with the nondeterministic locking 2-toggle is PSPACE-complete.
\end{theorem}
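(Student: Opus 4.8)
The plan has two halves. First, membership in \PSPACE is routine: a configuration of a system of gadgets is just the agent's location together with the state of every gadget, which takes space polynomial in the input, so a nondeterministic procedure that repeatedly guesses the agent's next traversal (of a gadget or along a hallway) decides reachability in polynomial space; hence the problem lies in \NPSPACE, which equals \PSPACE by Savitch's theorem (restricting to planar inputs changes nothing). The substance is \PSPACE-hardness.

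For hardness I would reduce from planar reachability with the \emph{deterministic} locking 2-toggle, which is \PSPACE-complete by Theorem~\ref{thm:L2T_PSPACE} (the locking 2-toggle being an interacting-$k$-tunnel reversible deterministic gadget, $k = 2$). By Lemma~\ref{lem:simulation reduction} it suffices to give a \emph{planar simulation} of the deterministic locking 2-toggle using only nondeterministic locking 2-toggles. Because the two gadgets ``resemble'' one another --- the nL2T differs from the L2T only in that a single resetting traversal may nondeterministically leave the gadget in one of two states --- the core task is to wire a constant-size system of nL2Ts, with suitable initial states and planar layout, whose externally observable legal traversal sequences are exactly those of one L2T. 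The guiding idea is to route the tunnels so that every spurious nondeterministic transition is harmless: either it has no externally observable effect (the two resulting internal configurations behave identically under the remaining wiring), or it drives the composite into a configuration the agent can leave only by reversing back to a legal L2T-configuration, so the agent is effectively compelled to emulate the deterministic gadget. One then verifies the two directions of simulation --- every L2T-legal traversal sequence from a state is realizable as a legal system traversal of the composite initialized at the matching state, and conversely every legal system traversal projects to an L2T-legal sequence --- and checks that the layout is planar with the locations in the required cyclic order, so the simulation, and hence the reduction, is planar.

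The crux is the ``only if'' direction: since the nL2T hands the agent strictly more transitions than the L2T, one must rule out that some sequence of spurious choices, possibly spread over several copies, realizes a forbidden L2T-traversal (e.g.\ re-entering a tunnel that should be locked), which is precisely why the general dichotomy of Theorem~\ref{thm:L2T_PSPACE} does not already settle the question. I expect this to be manageable because the extra behavior is confined to one traversal and can plausibly be neutralized by pairing two copies antiparallel; but if a direct simulation of the L2T turns out to be obstructed, two fallbacks remain: search instead for some other interacting-$k$-tunnel reversible deterministic gadget that the nL2T planarly simulates (still yielding hardness via Theorem~\ref{thm:L2T_PSPACE} and Lemma~\ref{lem:simulation reduction}), or give a direct reduction from bounded \NCL, verifying that the nL2T's additional transitions never let the agent bypass a verification gadget.
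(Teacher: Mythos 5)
First, a point of reference: the paper does not prove this theorem at all --- it is imported by citation from \cite[Theorem~3.1]{ani2020pspace} --- so there is no in-paper proof to match your argument against. Your membership argument is standard and fine, and your hardness strategy (reduce from planar reachability with the deterministic locking 2-toggle, which is hard by Theorem~\ref{thm:L2T_PSPACE}, by exhibiting a planar simulation of it from nondeterministic locking 2-toggles and invoking Lemma~\ref{lem:simulation reduction}) is the natural route and, as far as I can tell, essentially the one the cited source takes.

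The gap is that the simulation is never actually produced: the entire content of the theorem is that construction together with its two-directional verification, and your proposal defers exactly this step (``I expect this to be manageable,'' ``can plausibly be neutralized by pairing two copies antiparallel''). Moreover, the guess points in a more complicated direction than necessary. In one-player reachability the agent resolves all nondeterminism in its own favor, so the only possible danger from the extra transitions is that the new state (state~4 in Figure~\ref{fig: nL2T}, reachable nondeterministically by the unlocking traversal that otherwise returns to state~2) grants the agent traversals that state~2 does not. The key check --- which is the step your proposal omits --- is that every transition available from state~4 is also available from state~2 with the same target, i.e., state~4 is behaviorally dominated by state~2. Granting that, a \emph{single} nondeterministic locking 2-toggle with the identity location map already planarly simulates the locking 2-toggle: any legal L2T-sequence is realized by always selecting the branch into state~2, and any legal nL2T-sequence is replayed in the L2T by reading state~4 as state~2. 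No antiparallel pairing of copies is needed, and your worry about ``spurious choices spread over several copies'' dissolves once this domination is verified; but until it (or some explicit wiring) is verified, the proof is not there.
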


The final main gadget we will make use of is a type of self-closing door shown in Figure~\ref{fig:self-closing door}. This gadget will be used in our result on Push-1F in Section~\ref{sec:Push-1F}.
\begin{theorem}\label{thm:dir-self-closing-door-pspace}
  {\rm \cite[Theorem 4.2]{doors}}
	Planar reachability with any normal or symmetric self-closing door
	is PSPACE-hard.
\end{theorem}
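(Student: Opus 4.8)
The plan is to prove PSPACE-hardness by reducing from \textsc{True Quantified Boolean Formula} (TQBF), the canonical PSPACE-complete problem, building a planar system of self-closing doors that is solvable exactly when the formula is true. I would first pin down one concrete ``canonical'' self-closing door $D^\star$ --- the directed one with an always-available \emph{open} traversal $o_\text{in}\to o_\text{out}$ that forces the state to \emph{open}, and a \emph{traverse} traversal $t_\text{in}\to t_\text{out}$ that is legal only in state \emph{open} and leaves the door \emph{closed} --- and prove the reduction for planar reachability with $D^\star$. I would then reduce the general statement to this case: for each of the other door types named in the theorem (the remaining directed orientations and the symmetric variant), give a small planar simulation of $D^\star$ from copies of that door --- the directedness is essentially free, and the symmetric door's extra structure only needs to be routed into a harmless dead end --- so that Lemma~\ref{lem:simulation reduction} transfers the hardness. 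This last step is a finite case check rather than a new idea.

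For the core reduction I would use the standard template for PSPACE-hardness of one-player motion planning. Write the formula as $Q_1 x_1\,Q_2 x_2\cdots Q_n x_n\,\varphi$ with $\varphi$ in 3-CNF. Each clause gets a ``clause door'' (a copy of $D^\star$); the agent passes through a \emph{setting corridor} for each variable, and walking the true-corridor (resp.\ false-corridor) of $x_i$ traverses the \emph{open} paths of exactly the clause doors containing the literal $x_i$ (resp.\ $\lnot x_i$). After all variables are set, a \emph{verification pass} forces the agent to traverse every clause door once; since traversing consumes the open state, the pass completes iff the current assignment satisfies $\varphi$. An existential quantifier is just a branching hallway where the agent chooses which setting corridor to take. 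A universal quantifier is a gadget built from a few auxiliary self-closing doors acting as a binary counter: it lets the agent proceed only after it has completed a successful verification pass, returned, flipped that variable, re-chosen all later existential variables, and verified again --- recursively --- so that the goal is reachable iff \emph{every} assignment to the universal variables passes verification. The self-closing door is exactly the right primitive here: ``open, then consume on traversal'' gives the clause doors, while chaining opens and traverses gives the one-way sequencing locks that force the quantifier loop to run honestly.

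The two places I expect the real work are soundness and planarity. For soundness one must show that any legal system traversal reaching the goal corresponds to an honest evaluation of the formula: the dangers are that the agent gains access to a door's \emph{open} path out of turn, re-enters the verification region, or interleaves distinct iterations of the universal loop. This is handled by making every corridor effectively one-way (directed doors give this for free; undirected variants get a standard one-way sub-assembly) and by letting the counter-doors' states record precisely where we are in the quantifier recursion, then arguing by induction on the quantifier prefix that the reachable configurations are exactly the intended ones. For planarity, the naive layout has many wire crossings, so the crux is to obtain a \emph{crossover}: a planar assembly of self-closing doors traversable in each of two transverse directions, any number of times, without leaking state between the directions. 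Building and verifying such a crossover from self-closing doors is the single most delicate part of the construction; with it in hand, one lays out the quantifier chain, setting corridors, clause doors, and verification pass in the plane and replaces each crossing by the crossover. Everything else is routine.
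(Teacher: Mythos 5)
The paper does not actually prove this theorem; it imports it verbatim from \cite[Theorem 4.2]{doors}, so there is no in-paper argument to compare against. Judged on its own terms, your outline follows the classical route (a Viglietta-style TQBF reduction with clause doors, quantifier corridors, a verification pass, and a door-built crossover for planarity), which is indeed the lineage of the cited result. However, the two steps you defer are precisely where the content lies, and one step as literally described would fail.

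The failing step: with a \emph{self-closing} door, the verification traversal consumes the open state of every clause door. In the standard quantifier recursion, after a universal variable $x_i$ is flipped the agent re-enters only the inner quantifiers $x_{i+1},\dots,x_n$ before the next verification pass; the outer variables $x_1,\dots,x_{i-1}$ keep their assignments and their setting corridors are not re-walked. A clause satisfied only by an outer literal therefore has its door closed by the first verification pass and never re-opened, so the second pass fails even on a true formula. You must either force a complete re-setting of all variables before every verification pass (without destroying the counter structure of the universal gadgets), or first simulate a non-self-closing door out of self-closing ones; the cited paper does work of the latter kind. Beyond that, the crossover and the one-way (diode) sub-assemblies you invoke are not free: for the \emph{symmetric} self-closing door every tunnel is undirected, so even ``making corridors one-way'' already requires a nontrivial construction from the door itself, and the reusable, non-leaking planar crossover is the bulk of \cite{doors}. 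As written, the proposal is a plausible sketch of the routine part of the proof, with the genuinely hard parts flagged but not executed and one described mechanism that needs repair.
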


\begin{figure}
	\centering
	\includegraphics{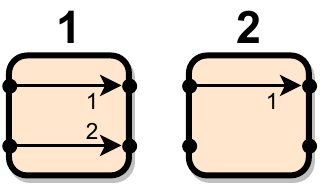}
	\caption{State diagram for the directed open-optional self-closing door.  The door must be opened by visiting its opening location before every traversal.}
	\label{fig:self-closing door}  
\end{figure}

\section{Push-$*$F is PSPACE-complete}
\label{sec:Push-*F}

In this section, we show that Push-$*$F is PSPACE-complete
using a reduction from planar reachability with self-closing doors
(the gadget shown in Figure~\ref{fig:self-closing door}),
which is PSPACE-complete by Theorem~\ref{thm:dir-self-closing-door-pspace}.
Recall that this model allows for the agent to
push any number of movable blocks in a row.
The gadgets are based on gadgets from \cite{bremner1994motion},
which reduced from TQBF but happened to build a door gadget along the way.
We include this proof because the only known proof that Push-$*$F is PSPACE-complete
\cite{bremner1994motion} remains unpublished.\footnote{\cite{demainepush2F} (which proves Push-$k$F PSPACE-complete for all $k \geq 2$)
  claims to also prove PSPACE-completeness of Push-$*$F,
  but their gadgets in fact seem to depend on the specific (finite) value
  of~$k$. Specifically, consider their Figure 2c, designed for Push-3F;
  in Push-4F, $I \to O$ and $LI \to LO$ can be traversed
  without visiting $U$ in between, so the gadget
  requires changes for increased agent strength.}

Figure~\ref{fig:Push-*F} shows the gadgets.
First, Figure~\ref{fig:Push-*F diode} gives a \emph{diode} gadget,
which can be traversed repeatedly from left to right but not vice versa.
Then, we build a self-closing door out of these diode gadgets,
represented by the arrows in Figures~\ref{fig:Push-*F scd open}
(the open state) and~\ref{fig:Push-*F scd closed} (the closed state).
In the open state,
the agent can traverse from the middle-left to the top-left entrance as shown in Figure~\ref{fig:Push-*F close steps},
putting the door into the closed state.
Attempting to perform this traversal in the closed state fails when trying to push
the block left (the third step in Figure~\ref{fig:Push-*F close steps}).
In the closed state, the agent can open the door from the bottom-left entrance,
as shown in Figure~\ref{fig:Push-*F open steps}.

\begin{figure}
  \centering
  \subcaptionbox{\label{fig:Push-*F diode}Diode, represented by arrows in (b) and (c).
    Based on \cite[Figure~4]{bremner1994motion}.}
  [1.75in]
  {\includegraphics[scale=0.6]{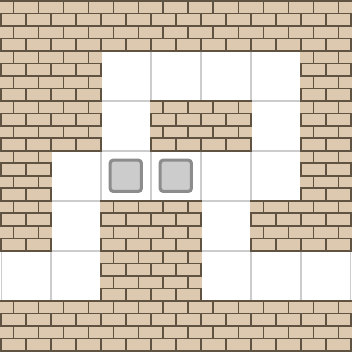}}
  \hfill
  \subcaptionbox{\label{fig:Push-*F scd open}Self-closing door in open state.
    Based on \cite[Figure~5]{bremner1994motion}.}
  {\includegraphics[scale=0.6]{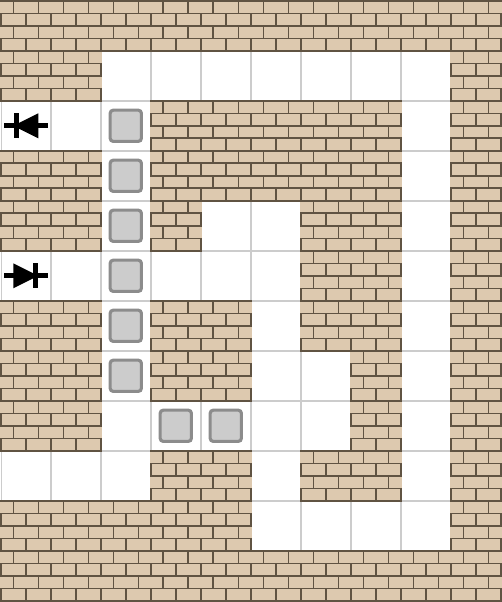}}
  \hfill
  \subcaptionbox{\label{fig:Push-*F scd closed}Self-closing door in closed state.
    Based on \cite[Figure~5]{bremner1994motion}.}
  {\includegraphics[scale=0.6]{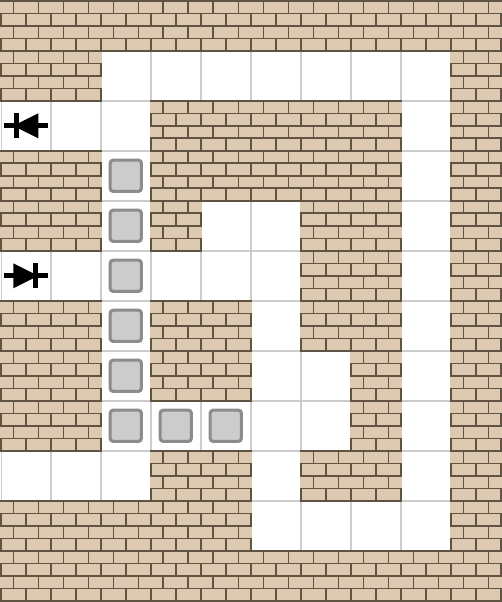}}

  \medskip
  \subcaptionbox{\label{fig:Push-*F close steps}Self-closing traversal sequence.}
  {\includegraphics[scale=0.35]{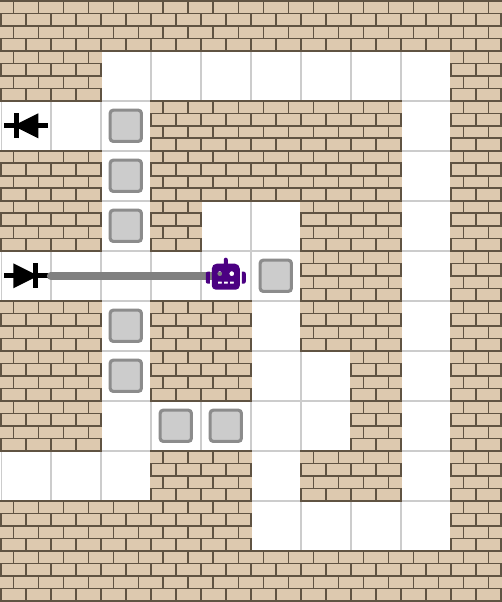}\quad
  \includegraphics[scale=0.35]{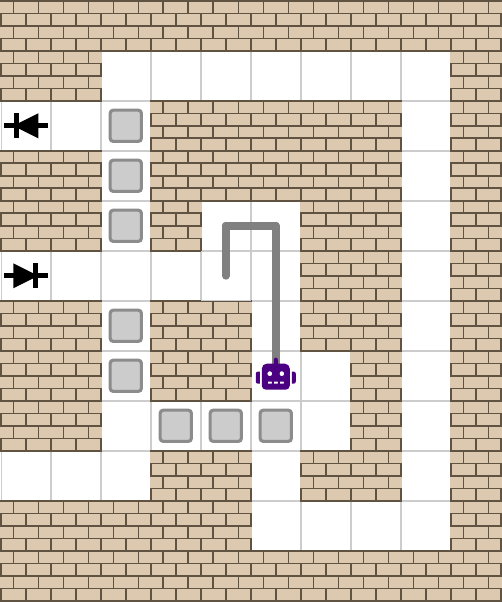}\quad
  \includegraphics[scale=0.35]{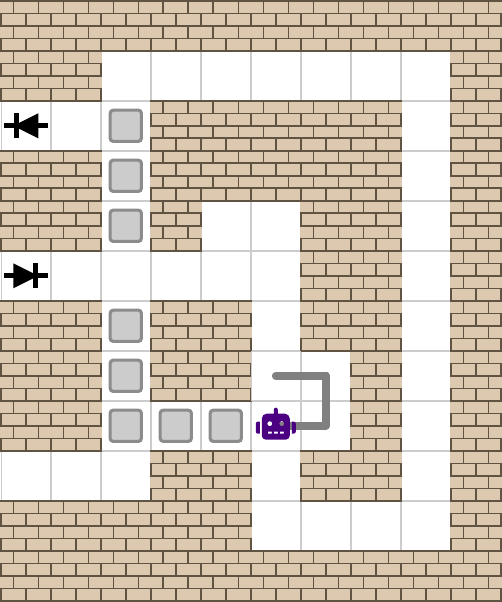}\quad
  \includegraphics[scale=0.35]{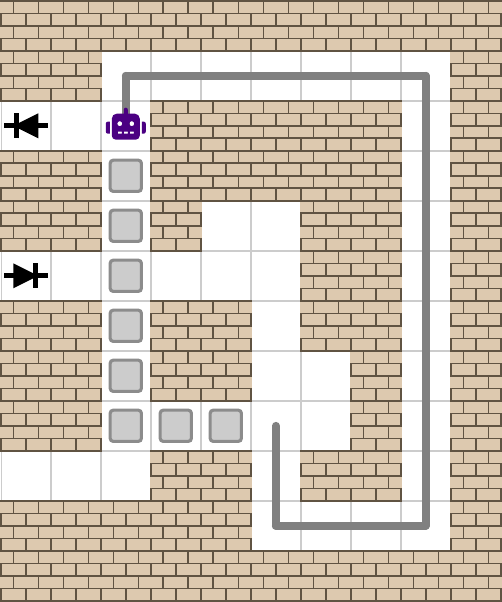}}
  \hfill
  \subcaptionbox{\label{fig:Push-*F open steps}Open sequence.}
  {\includegraphics[scale=0.35]{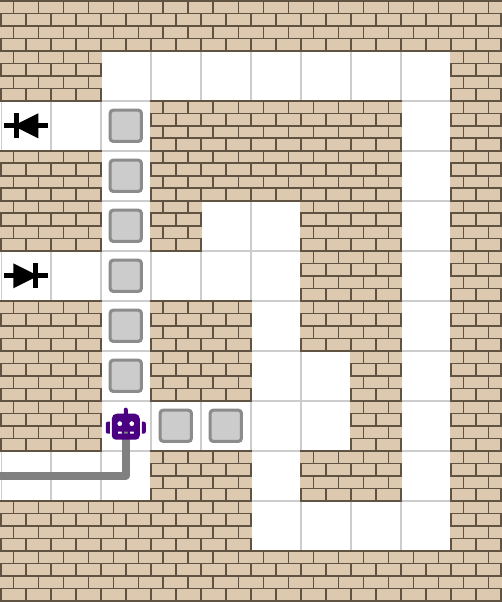}}
  \caption{Gadgets proving PSPACE-completeness of Push-$*$F.}
  \label{fig:Push-*F}
\end{figure}

\section{BlockDude and BloxDude are PSPACE-complete}
\label{sec:BlockDude}

In this section, we show that BlockDude and BloxDude are PSPACE-complete
using a reduction from planar reachability with locking 2-toggles
(the gadget shown in Figure~\ref{fig:L2T}), which is PSPACE-complete by
Theorem~\ref{thm:L2T_PSPACE}. Recall from Section~\ref{sec:Block Pushing Models} in this model blocks can be picked up by BlockDude from an adjacent square. BloxDude allows both picking up and pushing blox, and the reduction will be a small modification to the BlockDude proof. 

We will build hallways allowing the player to move between connected locations on gadgets. To connect more than two locations, we need a branching hallway, which is shown in Figure~\ref{fig: block branching}. This allows the player to freely move between any of the three entrances.

\begin{figure}
	\centering
	\includegraphics{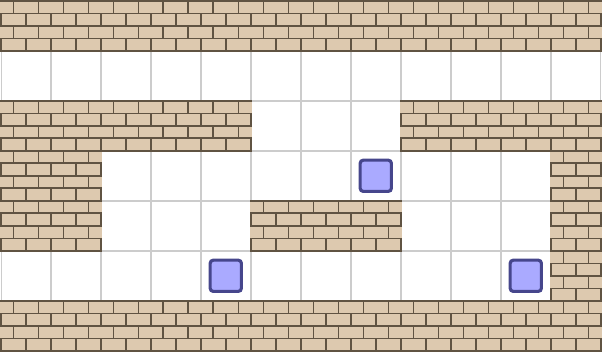}
	\caption{A branching hallway for BlockDude. Blue squares represent blocks (which can be picked up).}
	\label{fig: block branching}
\end{figure}

We now describe how the player can use the branching hallway in a way that always lets them move between any of its entrances. Whenever the player is outside the branching hallway, both bottom blocks will be in their original positions, and the top block will be somewhere on the middle platform, depending on the most recently taken exit.
When the player arrives at the branching hallway, they will first move the top block to the right side of the middle platform (the position in Figure~\ref{fig: block branching}). The only case where this is nontrivial is when the player enters at the bottom with the top block on the left. In this case, the player can go under the middle platform and climb up from the right by moving both bottom blocks. Then they can pick up the top block and step back down on the right, causing the carried block to fall onto the right end of the middle platform. Finally, they can reset the bottom blocks and return to the bottom entrance. 
Once the top block is on the right, the player can take whichever exit they need. If they take the top left exit, they will move the top block to the left first.

To embed an arbitrary planar graph in BlockDude, we also need to be able to
turn hallways and in particular to make vertical hallways despite gravity.
Fortunately, the branching hallway in Figure~\ref{fig: block branching}
can achieve both goals. If we ignore the top-right entrance, the agent
can turn around and make some vertical progress.  By chaining these switchbacks
in alternating orientation, we can build an arbitrarily tall vertical hallway.

To complete the proof of PSPACE-hardness, we only need to build a locking 2-toggle. 
We will construct the locking 2-toggle out of simpler pieces, as shown in Figure~\ref{fig: block l2t compressed}. The simpler pieces are two kinds of 1-toggle: one just for the player, and one that the player can carry a block through. The state diagram for a 1-toggle is given in Figure~\ref{fig: 1-toggle_state}. When the player arrives at (say) the bottom left entrance, they can grab the block in the middle and bring it to the left side, and use it to reach the top left entrance. With the block stuck on the left, the right side cannot be traversed until the player returns to the top left, puts the block back, and exits the bottom left. The player cannot move through this gadget in any way not allowed by a locking 2-toggle. They may leave the block on the left side when the exit the bottom left, but this does not achieve anything; it only prevents them from traversing the right side.

\begin{figure}
	\centering
	\includegraphics{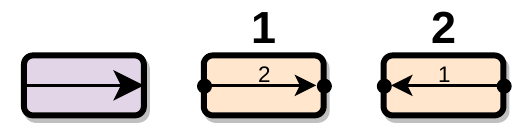}
	\caption{Icon and state diagram for the 1-toggle.  Leftwards and rightwards traversals must alternate.}
	\label{fig: 1-toggle_state}
\end{figure}

\begin{figure}
	\centering
	\includegraphics{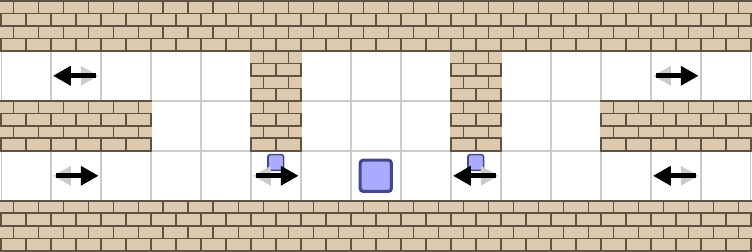}
	\caption{The schematic for our locking 2-toggle for BlockDude. Arrows with a faded backward arrowhead are 1-toggles. Only the player can go through the 1-toggle unless it has a block icon above the arrow, in which case the player can carry a block through.}
	\label{fig: block l2t compressed}
\end{figure}

Our 1-toggle for just the player is shown in Figure~\ref{fig: block 1-toggle}. In the state shown, the player can not enter on the right. If they enter on the left, they can move the blocks to exit on the right, but in doing so must block the left entrance. Because of the 1-high hallways, the player can not bring a block through this gadget.

\begin{figure}
	\centering
	\includegraphics{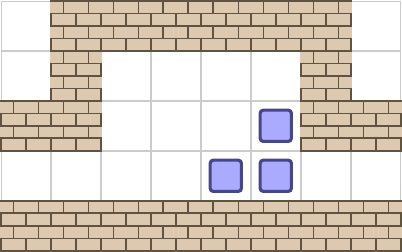}
	\caption{A 1-toggle for BlockDude, currently traversable from left to right.}
	\label{fig: block 1-toggle}
\end{figure}

The 1-toggle that lets the player carry a block through is more complicated, and is shown in Figure~\ref{fig: block 1-toggle carry}. If the player enters on the left with or without a block, they can get to the right as follows:
\begin{itemize}
	\item Move the top staircase to the right, so they can climb all the way down.
	\item Move the top staircase and then the bottom staircase to a single pile in the bottom left corner.
	\item Move the single pile to the bottom right corner.
	\item Use three blocks to build a staircase to the middle platform on the right, and move the rest of the blocks up to that platform.
	\item Use another three blocks to build a staircase to the right exit.
\end{itemize}

To reach either exit, there must be at least three blocks on the bottom level to form a staircase to the middle platform, and three blocks on the middle platform to form a staircase to the exit. In particular, six blocks must stay inside the gadget, so the player can leave with a block only if they brought one with them. If the player tries to enter the side opposite the one they most recently exited, they will be blocked by both staircases and unable to get across the gadget.

\begin{figure}
	\centering
	\includegraphics[width=\linewidth]{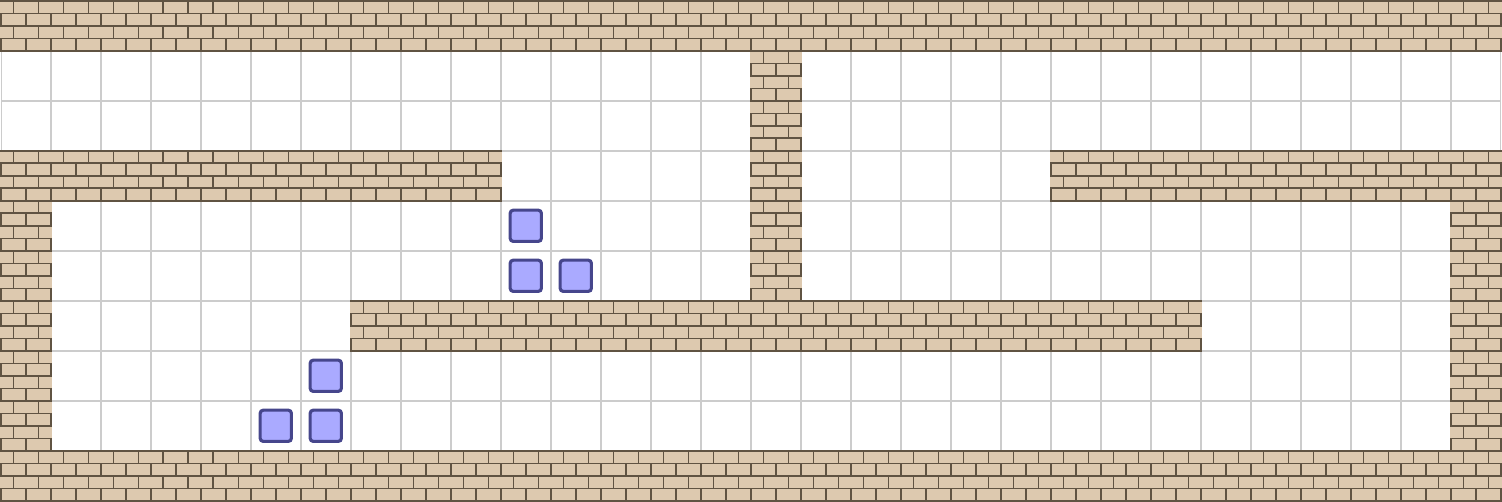}
	\caption{A 1-toggle for BlockDude that lets the player carry a block through it, currently traversable from left to right.}
	\label{fig: block 1-toggle carry}
\end{figure}

This 1-toggle might break if the player brings several additional blocks to it, but it will never be possible to bring more than one additional block because of the structure of our locking 2-toggles.

With these components, we can fill in our schematic for a locking 2-toggle (Figure~\ref{fig: block l2t compressed}), which we show in full in Figure~\ref{fig: block l2t full}. To summarize: the player can enter on either side, at the lower entrance. They can get to the block in the center, but must return to the side they came from. Then they can use this block to reach the top exit on the same side. This makes the center block inaccessible from the other side, so the other side cannot be traversed until the player comes back in the opposite direction and returns the center block.

\begin{figure}
	\centering
	\includegraphics[width=\linewidth]{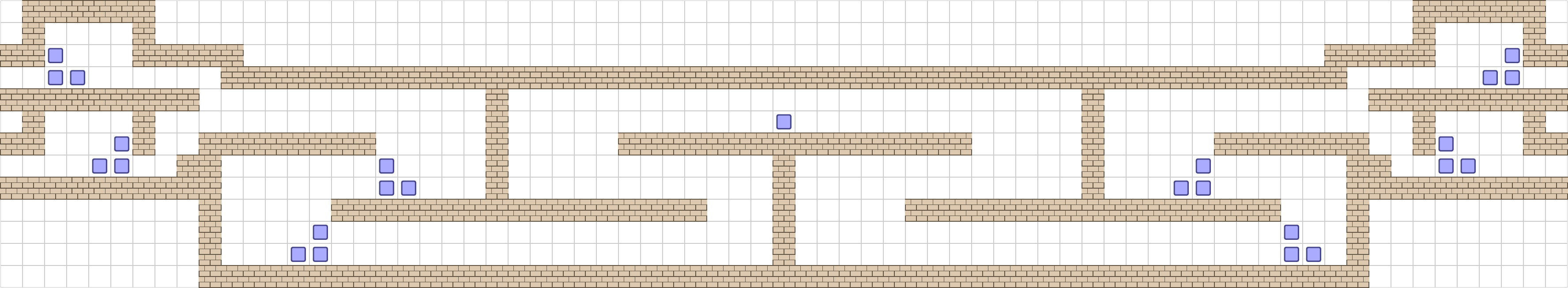}
	\caption{The full locking 2-toggle for BlockDude, combining Figures~\ref{fig: block l2t compressed}, \ref{fig: block 1-toggle}, and \ref{fig: block 1-toggle carry}.}
	\label{fig: block l2t full}
\end{figure}

\subsection{BloxDude is PSPACE-complete}
\label{sec:BloxDude}
In this section we discuss how to adapt the prior proof for BlockDude puzzles to work for blox which can both be picked up and pushed. All the valid traversals from our BlockDude constructions remain and we only need to prevent unwanted movement of the blox due to pushing. 

First, whenever there is a hallway in which a blox should not be able to be moved, such as all three hallways from the branching hallway, we add a step in the hallway, as shown in Figure~\ref{fig: no-blox-hallway}. Thus the blox cannot be carried and if it is pushed to the step it will become stuck.

\begin{figure}
	\centering
	\includegraphics[scale=1]{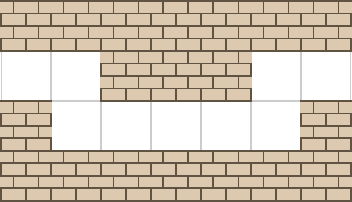}
	\caption{A blox cannot be moved through this hallway.}
	\label{fig: no-blox-hallway}
\end{figure}

Next we show how to adapt the 1-toggle with block traversal so it works in this setting. This is given in Figure~\ref{fig: blox 1-toggle carry}. The three-block-tall staircases ensure that bringing a single blox from the wrong direction does not allow deconstructing a staircase from behind. In particular, the middle layer has two blox in a row which cannot be pushed and thus one extra blox will not enable the Dude to deconstruct the staircase from that side.

\begin{figure}
	\centering
	\includegraphics[width=\linewidth]{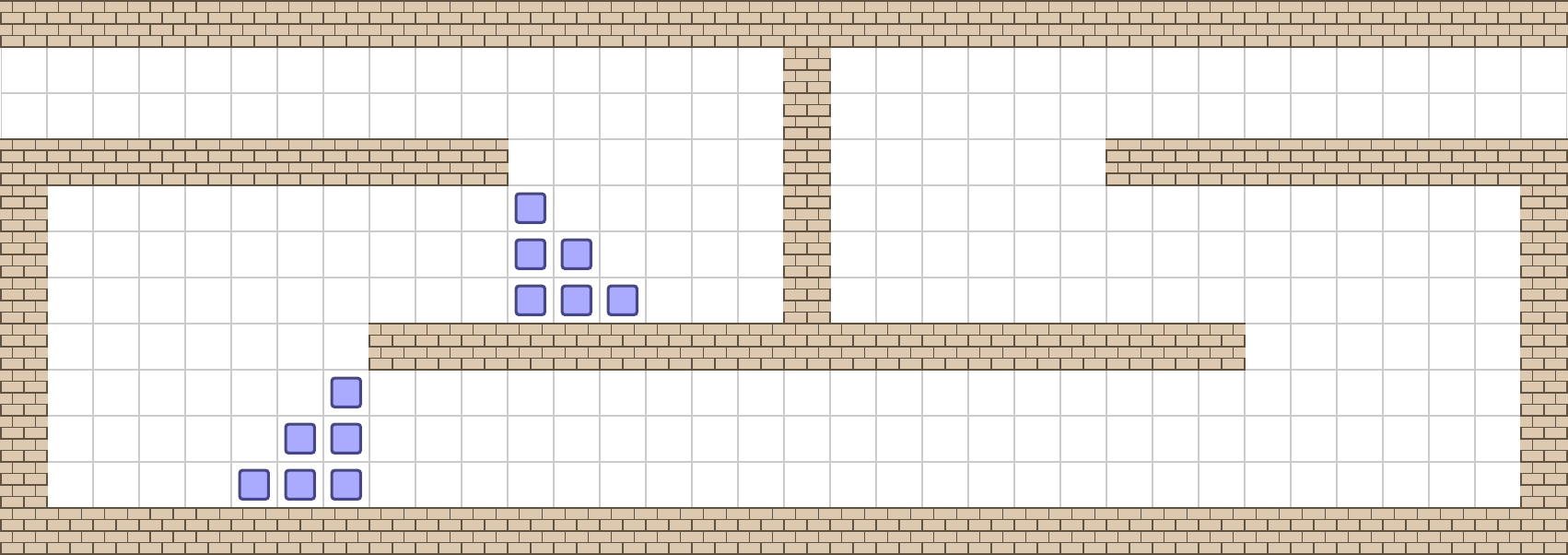}
	\caption{A 1-toggle for BloxDude that lets the player carry a block through it, currently traversable from left to right.}
	\label{fig: blox 1-toggle carry}
\end{figure}

We also need a regular 1-toggle, and the construction in Figure~\ref{fig: block 1-toggle} can be broken in the blox model. Luckily we have a hallway that prevents blox from being carried or pushed through it, so we can add such a hallway to each end of the gadget in Figure~\ref{fig: blox 1-toggle carry} preventing extra blox from entering or leaving.  This yields a regular 1-toggle which does not permit blox to pass through.

Once we have the prior two gadgets, it is clear the locking 2-toggle in Figure~\ref{fig: block l2t compressed} will still work in the blox model, giving the desired PSPACE-hardness result.

\section{Checkable Gadget Framework}
\label{sec:checkable gadget framework}

In this section, we introduce a new extension to the gadgets framework
which will be used in the rest of the paper.
This extension allows us to indirectly construct a gadget $G$
by first constructing a ``checkable'' version of $G$,
and then using ``postselection'' to obtain $G$.
The checkable $G$ behaves identically to $G$ except that
the agent can make undesired traversals into ``broken'' states
which prevent later ``checking'' traversals.
The postselection operation removes these possibilities by guaranteeing that the agent will perform the checking traversals at the end, so to solve reachability, the agent could never perform the undesired traversals.
The price we pay for this ability to constrain the behavior of gadgets is that the resulting simulations are no longer drop-in replacements as in the local simulations of Definition~\ref{def:local simulation}; instead we obtain ``nonlocal simulations'' which require altering the entire surrounding system of gadgets:

\begin{definition}
A finite set of gadgets \(\mathcal{G}\) \emph{[planarly] nonlocally simulates}
a gadget \(H\) if, for every finite set of gadgets \(\mathcal{G}'\),
there is a polynomial-time (many-one/Karp) reduction from [planar] reachability
with \(\{H\} \cup \mathcal{G}'\) to [planar] reachability with
\(\mathcal{G} \cup \mathcal{G}'\).
\end{definition}

Lemma~\ref{lem:simulation reduction} says that simulations are nonlocal simulations, so this notion is a generalization of Definition~\ref{def:local simulation}.

Next we define ``checkable'' gadgets via ``postselection'',
which transforms a gadget with broken states
(where a checking traversal sequence is impossible)
into an idealized gadget where those broken states are prevented.
At this stage, the prevention is by a magical force,
but we will later implement this force with a nonlocal simulation.

\begin{definition}\label{def:postselect}
Let $G$ be a gadget, $C$ be a traversal sequence on $L(G)$,
and $L'\subset L(G)$.
Call a state $q$ of $G$ \emph{broken} if $C$ is not legal from~$q$.
Assume that broken states are preserved by transitions on $L'$
in the sense that,
if $q$ is broken and there is a transition $(q,a)\to (q',b)$ where $a,b\in L'$,
then $q'$ is also broken.

Define \emph{$\PostSelect{G,C,L'}$} to be the gadget $G'$ where $L(G')=L'$, $Q(G')$ contains the nonbroken states of $G$, and $T(G')$ contains the transitions of $G$ restricted to $L'$ and $Q(G')$.\footnote{If every state of $G$ is broken,
then $\PostSelect{G,C,L'}$ has no states.
In this case, it is impossible to use $\PostSelect{G,C,L'}$ in a system of gadgets because that requires specifying an initial state,
so all of our theorems hold vacuously.
}
When there exist $C$ and $L'$ such that $\PostSelect{G,C,L'}$ is equivalent to $G'$, we say that $G$ is a \emph{checkable $G'$}, and we call $C$ the \emph{checking traversal sequence}.
\end{definition}

A traversal sequence $X$ is legal for $\PostSelect{G,C,L'}$ from state $q$ if and only if $XC$ is legal for $G$ from $q$, because both are equivalent to there being a nonbroken state reachable by traversing $X$.
Intuitively, $\PostSelect{G,C,L'}$ is the gadget that results from forcing the agent to traverse $C$ after solving reachability, to ensure that the gadget was left in a nonbroken state, and hiding locations in $L \setminus L'$.
$\PostSelect{G,C,L'}$ behaves like $G$ on the locations $L'$ except that transitions into broken states are prohibited.

We now state the main result of the checkable gadget framework, which is in
terms of two simple (and often easy-to-implement) gadgets
SO (single-use opening) and MSC (merged single-use closing gadgets)
defined in Section~\ref{sec:Base Gadgets}.

\begin{theorem}
  \label{thm:postselect}
  For any \(G\), \(C\), and \(L'\) satisfying the assumptions of Definition~\ref{def:postselect}, \(\{G, \SO, \MSC\}\) planarly nonlocally simulates \(\PostSelect{G, C, L'}\).
\end{theorem}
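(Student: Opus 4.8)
The plan is to give, for each finite set of gadgets $\mathcal{G}'$, a polynomial-time many-one reduction from planar reachability with $\{\PostSelect{G,C,L'}\}\cup\mathcal{G}'$ to planar reachability with $\{G,\SO,\MSC\}\cup\mathcal{G}'$. Given a planar instance --- a planar system $S$ with start $s$, target $t$, and copies $G_1,\dots,G_n$ of $\PostSelect{G,C,L'}$ with initial states $q_1^0,\dots,q_n^0$ (each a nonbroken state of $G$, since $Q(\PostSelect{G,C,L'})$ consists of the nonbroken states) --- I build a planar system $S'$ over $\{G,\SO,\MSC\}\cup\mathcal{G}'$ with the same start $s$ and a new target $\hat t$, as follows. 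First, replace each $G_j$ by a copy of $G$ in state $q_j^0$, keeping the hallways at the locations in $L'$ exactly as in $S$ and leaving the ``check locations'' $L(G)\setminus L'$ free. Second, add a \emph{checking chain}: a linear structure, built from hallways and a single $\MSC$ gadget, that when traversed from end to end forces the agent to perform the concatenated sequence $C^{(1)}C^{(2)}\cdots C^{(n)}$, where $C^{(j)}$ is $C$ performed on $G_j$ (the order among the $G_j$ is irrelevant, since gadgets are local). The $\MSC$ gadget provides one single-use tunnel per step of this chain; ``merged'' forces those tunnels to be used in the prescribed order, and ``single-use closing'' forces each to be used at most once, so the only way along the chain is to alternate an $\MSC$-tunnel with the matching traversal of the appropriate $G_j$. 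Third, connect the entrance of the chain to $t$ through an $\SO$ gadget whose opening location sits at $t$, so the chain is sealed off from the rest of $S'$ until the agent first reaches $t$ (and, $\SO$ being single-use-opening, this seal is removed at most once), with $\hat t$ at the far end of the chain. The main part of $S'$ is planar because $S$ is (substituting the planar gadget $G$, with the cyclic order on $L'$ induced from $G$, into the slot of $\PostSelect{G,C,L'}$ respects cyclic orders), and $\SO$ and $\MSC$ are black boxes whose internal tunnels need not be planar, so the only crossings to resolve are between hallways; these I remove with crossover gadgets built from the available gadget set (see Section~\ref{sec:Base Gadgets}). The construction is plainly polynomial time.

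For correctness, suppose first that $S$ is a YES-instance. The agent replays the solution inside $S'$: every move on a $G_j$ is an $L'$-transition between nonbroken states, hence a transition of $G$, so the play is legal and leaves every $G_j$ nonbroken; reaching $t$ opens the $\SO$ gate, and since each $G_j$ is nonbroken, $C$ is legal from its state, so the agent walks the checking chain performing $C^{(1)},\dots,C^{(n)}$ in order and reaches $\hat t$. Conversely, suppose $S'$ is a YES-instance and fix a winning play. Because $\MSC$'s tunnels are single-use and ordered, the agent traverses the checking chain exactly once, as one contiguous block that starts at some moment $m^*$, performs $C^{(1)}\cdots C^{(n)}$ with no deviations, and ends at $\hat t$; and the $\SO$ gate forces this block to occur after the agent first reaches $t$. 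Before $m^*$ the agent never leaves a $G_j$ at a check location, since the chain is still sealed: from such a location it would land in a region bounded by closed $\MSC$-tunnels with no move back into $G_j$, hence could never reach $\hat t$. So up to $m^*$ the agent uses only $L'$-transitions on each $G_j$; and each $G_j$ is nonbroken at $m^*$, for otherwise the later block $C^{(j)}$ (performed with no intervening move on $G_j$) would be illegal. Since broken states are preserved by $L'$-transitions and $q_j^0$ is nonbroken, every $G_j$ is in fact nonbroken throughout the play up to $m^*$ --- it could not have gone broken and back. Hence the prefix of the play up to the agent's first visit to $t$, restricted to the main part (whose hallways, $\mathcal{G}'$-gadgets, and $L'$-connections are exactly those of $S$, with the chain attached only at $t$ and the check-ports so that main-part reachability is unchanged), is a legal play of $S$ reaching $t$, so $S$ is a YES-instance.

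The delicate part is verifying that $\SO$, $\MSC$, and the routing really do rule out every unintended behavior: that the agent can reach $\hat t$ only by first solving $S$ and then honestly performing all of $C^{(1)},\dots,C^{(n)}$, that it gains nothing by slipping into the sealed chain through a check location during the main phase, and that it gains nothing by interleaving chain steps with main-phase moves. Making that case analysis manageable is exactly the purpose of the $\SO$ and $\MSC$ definitions in Section~\ref{sec:Base Gadgets}; the remainder is bookkeeping over where the agent can be at each point of a purported winning play in $S'$.
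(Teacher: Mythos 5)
Your high-level plan is the same as the paper's (which factors it into a local simulation of a ``simply checkable'' $\PostSelect{G,C,L'}$ plus a lemma that chains all the check traversals after $t$), but there are two concrete gaps. First, the checking chain's enforcement mechanism is misdescribed: $\MSC$ is a two-state, three-location gadget with a \emph{single} closing traversal, so ``a single $\MSC$ gadget'' cannot supply an ordered sequence of single-use tunnels. The paper instead attaches, to each step $a_i \to b_i$ of $C$, a pair consisting of a single-use opening gadget $O_i$ and a dicrumbler $D_i$, with the exit of $O_i$ wired to the opening location of $O_{i+1}$; that is what forces the checks to happen in order and at most once, and what keeps the check locations $L(G)\setminus L'$ unusable before the checking phase.

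Second, and more seriously, your planarity argument does not go through as stated. The set $\{G,\SO,\MSC\}$ does \emph{not} yield a reusable crossover; the only crossovers constructible from $\SO$ and $\MSC$ are the single-use crossover and the \emph{leaky} weak closing crossover. Your checking chain must cross (a) the edges connecting the main system to the locations of $L'$, which need to remain freely traversable throughout the main phase, and (b) arbitrary hallways of $S$ when routing from $t$ to the copies of $G$. Neither crossing can be replaced by a single-use gadget. The paper's resolution is the technical core of the proof: the checking path makes two ``delimiter'' passes across every edge $e_l$ using weak closing crossovers oriented so that the \emph{check} direction closes $e_l$ (harmless, since the main phase is over), uses single-use crossovers only for the chain crossing itself, and handles crossings with the ambient system via checkable hallway crossovers whose own $\cIn \to \cOut$ checks are threaded into the chain. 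Because the weak closing crossover is leaky, one must also show the agent gains nothing by leaking: any leaked moves form a sequence $Y$ on $L'$, so the restriction to $G$ is $XYC$, and the hypothesis that broken states are preserved by transitions on $L'$ converts ``$XYC$ legal'' into ``$XC$ legal.'' Your proposal flags this case analysis as ``bookkeeping,'' but without the specific leaky-crossover constructions, their orientation, and the $XYC \Rightarrow XC$ argument (which also requires passing to the transitive closure of $G$ so that multi-step excursions inside $G$ during a check collapse to the single transition $a_i \to b_i$), the reduction is not established.
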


The goal of this section is to prove Theorem~\ref{thm:postselect}.
Figure~\ref{fig: postselection overview} provides a schematic overview of the gadget simulations throughout this section that culminate in this result.
In Section~\ref{sec:Base Gadgets}, we describe the base gadgets needed for our
construction.
In Section~\ref{sec:Nonlocal}, we prove that nonlocal simulations compose in the natural way.
In Section~\ref{sec:simply checkable}, we introduce a particularly simple kind of checkable gadget, and show that they nonlocally simulate the gadget they are based on.
Finally, in Section~\ref{sec:Postselected Gadgets} we use all of these tools to prove Theorem~\ref{thm:postselect}.

\begin{figure}
  \centering
  \includegraphics{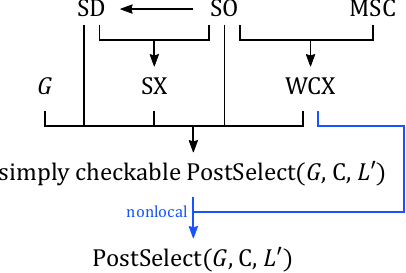}
  \caption{Overview of gadget simulations used for postselection.
    Black arrows show local simulations and blue arrows show nonlocal simulations.}
  \label{fig: postselection overview}
\end{figure}

\subsection{Base Gadgets}
\label{sec:Base Gadgets}

We now define two base gadgets and three additional derived gadgets,
shown in Figure~\ref{fig:v2 gadget states},
that we use to implement the machinery of checkable gadgets.
All five of these gadgets can change state only a bounded number of times and thus reachability with only these gadgets is in \NP;
they are ``LDAG'' in the language of \cite{lynch2020framework}.

\begin{figure}
	\centering
  \def\scale{0.8}
  \def\width{2.9cm}
	\subcaptionbox{\label{fig:SO_state}\centering Single-use opening gadget (SO)}[\width]{\includegraphics[scale=\scale]{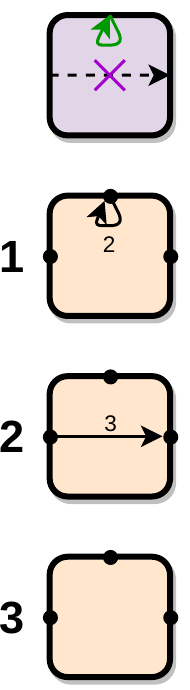}}
	\hfill
	\subcaptionbox{\label{fig:MSC_state}\centering Merged single-use closing gadget (MSC)}[\width]{\includegraphics[scale=\scale]{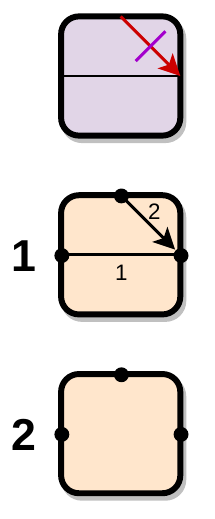}}
	\hfill
	\subcaptionbox{\label{fig:SD_state}\centering Dicrumbler/\allowbreak single-use diode (SD)}[\width]{\includegraphics[scale=\scale]{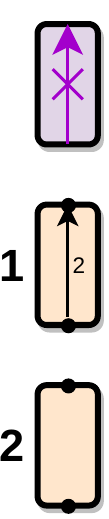}}
  \hfill
	\subcaptionbox{\label{fig: single-use-crossover_state}\centering Single-use crossover (SX)}[\width]{\includegraphics[scale=\scale]{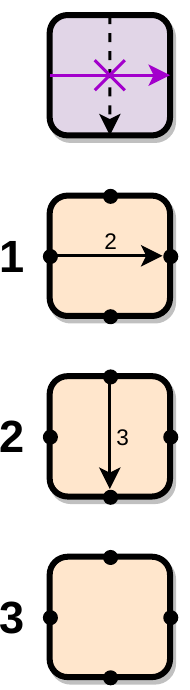}}
  \hfill
	\subcaptionbox{\label{fig: weak-closing-crossover_state}\centering Weak closing crossover (WCX)}[\width]{\includegraphics[scale=\scale]{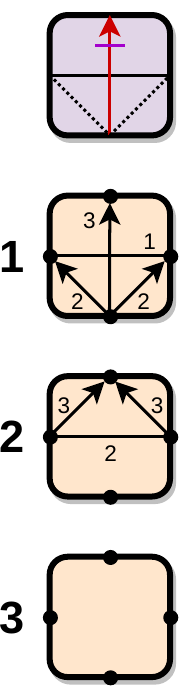}}
	\caption{Icons (top) and state diagrams (bottom)
    for two base gadgets (a--b) and three derived gadgets (c--e).
    Green arrows show opening traversals,
    red arrows show closing traversals,
    and purple crosses indicate traversals that close themselves.}
	\label{fig:v2 gadget states}
\end{figure}

The two base gadgets required for our construction are shown in
Figure~\ref{fig:SO_state}--\ref{fig:MSC_state}:

\begin{enumerate}[label=(\alph*)]
\item The \emph{single-use opening (SO)} gadget, shown in Figure~\ref{fig:SO_state}, is a three-state three-location gadget. In state 1, the ``opening'' location has a self-loop traversal (also called a button, or a port in \cite{doors}), which transitions to state 2. State 2 allows a single traversal between the other two locations, after which (in state 3) no traversals are possible.
\item The \emph{merged single-use closing (MSC)} gadget, shown in Figure~\ref{fig:MSC_state}, is a two-state three-location gadget. In the ``open'' state 1, horizontal traversals in both directions are freely available.  After a traversal from top to right, the gadget transitions to the ``closed'' state 2, where no traversals are possible.
\end{enumerate}

Next we describe three useful gadgets for our construction
which can be built from these base gadgets.

The \emph{dicrumbler/single-use diode (SD)} gadget,
shown in Figure~\ref{fig:SD_state}, is a two-state two-location gadget.
In state 1, there is a single directed traversal between the two locations,
which permanently closes the gadget in state 2 where no traversals
are possible.
The SD gadget can be simulated by either of the two base gadgets:
it is equivalent to state 2 of SO,
and to MSC restricted to the two locations incident to the closing traversal.

The \emph{single-use crossover (SX)} gadget, shown in Figure~\ref{fig: single-use-crossover_state}, allows one traversal from left to right and then one from top to bottom. It can be simulated using SO and SD gadgets as shown in Figure~\ref{fig: single-use-crossover}. The top location in the simulation cannot be entered until the top SO is opened. This opening is possible only after traversing the first two SDs, which prevents any further traversals coming from the left or going to the right. The bottom SO prevents premature traversals going to the bottom.

\begin{figure}
  \centering
	\includegraphics{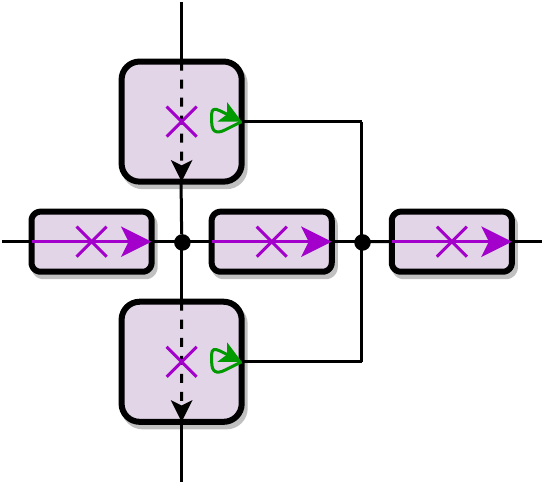}
	\caption{Construction of the single-use crossover from SO and SD gadgets.}
	\label{fig: single-use-crossover}
\end{figure}

The \emph{weak closing crossover (WCX)},
shown in Figure~\ref{fig: weak-closing-crossover_state},
initially allows traversals freely between the left and right.
If a bottom-to-top traversal is performed, no more traversals are possible.
However, a bottom-to-left or bottom-to-right traversal is also possible
(which also opens up left-to-top or right-to-top traversals),
making the crossover ``leaky''.
The weak closing crossover can be simulated using SO, MSC, and SD gadgets,
as shown in Figure~\ref{fig: weak-closing-crossover}.
To open the upper-right SO,
the agent needs to traverse the upper-left SO and then close the middle MSC.
To open the upper-left SO, the agent will need to close the leftmost MSC.
Having closed both the left and the middle MSCs, the agent is forced to traverse the bottom SO and close the rightmost MSC. 
The bottom SO can only be opened by
the agent traversing entering the bottom and traversing bottom two SDs, preventing any future traversals from the bottom.
In summary, in order to exit the top, the agent must have entered the bottom in the past, and have closed all three MSCs.
Entering the bottom changes to state 2, and exiting the top changes to state 3.

\begin{figure}
  \centering
	\includegraphics{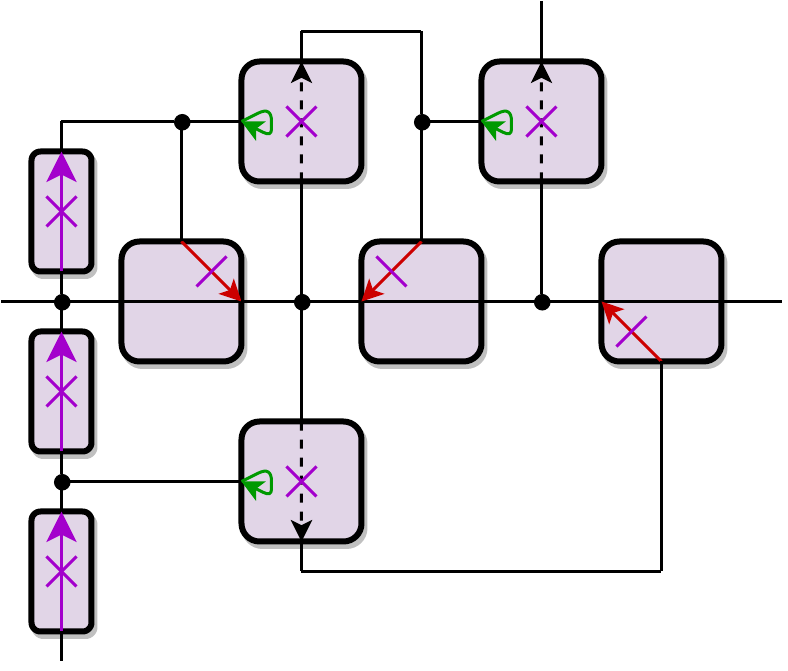}
	\caption{Construction of the weak closing crossover from SD, SO, and MSC gadgets.}
	\label{fig: weak-closing-crossover}
\end{figure}

\subsection{Nonlocal Simulation Composition}
\label{sec:Nonlocal}

A crucial fact about nonlocal simulation is that nonlocal simulations can be composed:

\begin{lemma}\label{lem:compose nonlocal}
Let $\mathcal G$ and $\mathcal H$ be finite sets of gadgets. Suppose $\mathcal G$ [planarly] nonlocally simulates every gadget in $\mathcal H$, and $\mathcal H$ [planarly] nonlocally simulates another gadget $H$. Then $\mathcal G$ [planarly] nonlocally simulates $H$.
\end{lemma}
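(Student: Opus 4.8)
The plan is to chain the two hypothesized nonlocal simulations through their defining reductions. Recall that ``$\mathcal G$ [planarly] nonlocally simulates $H$'' means: for every finite gadget set $\mathcal G'$, there is a polynomial-time many-one reduction from [planar] reachability with $\{H\}\cup\mathcal G'$ to [planar] reachability with $\mathcal G\cup\mathcal G'$. So to prove $\mathcal G$ nonlocally simulates $H$, I fix an arbitrary finite gadget set $\mathcal G'$ and must build a reduction from reachability with $\{H\}\cup\mathcal G'$ to reachability with $\mathcal G\cup\mathcal G'$.

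First I would apply the hypothesis that $\mathcal H$ nonlocally simulates $H$, taking the auxiliary set to be $\mathcal G'$ itself: this gives a polynomial-time reduction $R_1$ from [planar] reachability with $\{H\}\cup\mathcal G'$ to [planar] reachability with $\mathcal H\cup\mathcal G'$. Next, I want to peel off the gadgets of $\mathcal H$ one at a time using the hypothesis that $\mathcal G$ nonlocally simulates each gadget $H_i\in\mathcal H$. Write $\mathcal H=\{H_1,\dots,H_m\}$. For each $i$ from $1$ to $m$, I apply the hypothesis ``$\mathcal G$ nonlocally simulates $H_i$'' with auxiliary set $\mathcal G' \cup \{H_{i+1},\dots,H_m\}$ (which is finite), obtaining a polynomial-time reduction $S_i$ from [planar] reachability with $\{H_i, H_{i+1},\dots,H_m\}\cup\mathcal G'$ to [planar] reachability with $\mathcal G\cup\{H_{i+1},\dots,H_m\}\cup\mathcal G'$. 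Here I use that $\{H_i\}\cup\big(\{H_{i+1},\dots,H_m\}\cup\mathcal G'\big)$ is exactly the gadget set $\mathcal H_{\ge i}\cup\mathcal G'$ where $\mathcal H_{\ge i}=\{H_i,\dots,H_m\}$, so $S_i$ sends $\mathcal H_{\ge i}\cup\mathcal G'$-reachability to $\mathcal H_{\ge i+1}\cup\mathcal G$-augmented reachability. Composing $S_m\circ\cdots\circ S_1\circ R_1$ gives a polynomial-time reduction from [planar] reachability with $\{H\}\cup\mathcal G'$ to [planar] reachability with $\mathcal G\cup\mathcal G'$ (note $\mathcal G\cup\mathcal G\cup\cdots = \mathcal G$, and after all $H_i$ are removed only $\mathcal G\cup\mathcal G'$ remains). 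Since composition of polynomially many polynomial-time many-one reductions is again a polynomial-time many-one reduction (crucially, $m=|\mathcal H|$ is a fixed constant here, not part of the input, so even a naive bound on the blowup is fine), and since $\mathcal G'$ was arbitrary, this establishes that $\mathcal G$ [planarly] nonlocally simulates $H$.

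The only subtlety is bookkeeping rather than a genuine obstacle: I must make sure that at each step the ``$\mathcal G'$'' in the definition of nonlocal simulation is instantiated with the \emph{remaining} auxiliary gadgets (including the not-yet-eliminated $H_j$'s), so that the intermediate reachability instances are legitimate inputs to the next reduction. I also need to observe that in the planar case every reduction invoked preserves planarity by hypothesis, so the composite does too. One could equivalently phrase the induction as: nonlocal simulation by $\mathcal G$ is preserved under replacing a single gadget in the target's ``ambient'' set by $\mathcal G$, and iterate; but the explicit composition above is cleanest. No step requires more than routine verification, and there is no real main obstacle — the lemma is essentially a transitivity/compositionality statement whose proof is a careful unwinding of the definition.
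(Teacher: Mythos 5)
Your proposal follows essentially the same route as the paper: first apply the hypothesis that $\mathcal H$ nonlocally simulates $H$ to pass from $\{H\}\cup\mathcal G'$ to $\mathcal H\cup\mathcal G'$, then eliminate the gadgets of $\mathcal H$ one at a time using the hypothesis that $\mathcal G$ nonlocally simulates each $H_i$, instantiated with a suitably enlarged auxiliary set. One bookkeeping slip: your auxiliary set for $S_i$ is $\mathcal G'\cup\{H_{i+1},\dots,H_m\}$, which omits $\mathcal G$; but for $i\ge 2$ the output of $S_{i-1}$ is an instance over $\mathcal G\cup\{H_i,\dots,H_m\}\cup\mathcal G'$ and may already contain gadgets from $\mathcal G$, so $S_i$ as you defined it is not a reduction whose domain contains that instance. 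The fix is exactly what the paper does: take the auxiliary set at step $i$ to be $\mathcal G\cup\mathcal G'\cup\{H_{i+1},\dots,H_m\}$, using $\mathcal G\cup\mathcal G=\mathcal G$ so the chain terminates at $\mathcal G\cup\mathcal G'$. With that correction (which you partially anticipate in your closing remark, though you only mention the remaining $H_j$'s and not $\mathcal G$ itself), the argument is correct and matches the paper's.
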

\begin{proof}
  For a finite set of gadgets $\mathcal G'$,
  we must find a polynomial-time reduction from reachability with
  $\{H\}\cup\mathcal G'$ to reachability with $\mathcal G\cup\mathcal G'$.
  Let $\mathcal H=\{H_1,\dots,H_n\}$, where $n=|\mathcal H|$,
  and let $\mathcal H_i$ be the prefix $\{H_1,\dots, H_i\}$,
  so $\mathcal H_n=\mathcal H$.
  Then we construct a chain of reductions
  between reachability with different sets of gadgets:
  $$\{H\}\cup\mathcal G'
    ~\to~\mathcal G\cup\mathcal H_n\cup \mathcal G'
  ~\to~\mathcal G\cup\mathcal H_{n-1}\cup\mathcal G'
  ~\to~\cdots\to\mathcal G\cup\mathcal H_1\cup\mathcal G'
  ~\to~\mathcal G\cup\mathcal G'.$$
    The first reduction is because $\mathcal H = \mathcal H_n$
  nonlocally simulates~$H$.
    The remaining reductions come from the assumption that $\mathcal G$
  nonlocally simulates each $H_i \in \mathcal H$,
  which implies that there is a polynomial-time reduction from reachability
  with $\{H_i\}\cup\mathcal G\cup\mathcal H_{i-1}\cup\mathcal G'
  =\mathcal G\cup\mathcal H_i\cup\mathcal G'$ to reachability with
  $\mathcal G\cup\mathcal G\cup\mathcal H_{i-1}\cup\mathcal G'
  =\mathcal G\cup\mathcal H_{i-1}\cup\mathcal G'$.
\end{proof}

\subsection{Simply Checkable Gadgets}
\label{sec:simply checkable}

Next, we define a special kind of checkable gadgets, called ``simply checkable'' gadgets. A simply checkable $G$ is essentially a checkable $G$ where the checking sequence consists of a single traversal between two locations not in $L(G)$, called $\cIn$ and $\cOut$. Simply checkable gadgets will be useful for an intermediate step in our proof of Theorem~\ref{thm:postselect}.

\begin{definition} \label{def:simplycheckable}
For a gadget \(G\), a \emph{simply checkable} $G$ is a gadget $G'$ satisfying the following properties:
\begin{enumerate}
\item \(L(G') = L(G) \sqcup \{\cIn, \cOut\}\) has two new locations $\cIn, \cOut$. For planar gadgets, the cyclic orderings of the shared locations $L(G)$ are the same. (Locations $\cIn$ and $\cOut$ can be added to the cyclic order anywhere.)
\item There is a function $f:Q(G)\to Q(G')$ assigning a state of $G'$ to each state of $G$.
\item \label{def:simplycheckable:can-extend}
For any traversal sequence \(X\) that is legal for \(G\) from state $q$, the concatenated traversal sequence \(X \cdot [\cIn \to \cOut]\) is legal for \(G'\) from $f(q)$. 
\item \label{def:simplycheckable:end}
Every traversal sequence that ends at \(\cOut\) and is legal for $G'$ from state $f(q)$ has the form
\[
  X \cdot [\cIn \to \bullet, \bullet \to \bullet, \dots, \bullet \to \cOut]
\]
where \(X\) is legal for $G$ from state $q$
and the omitted $\bullet$ locations (if any) belong to $L(G)$.
\end{enumerate}
\end{definition}

Intuitively, a simply checkable $G$ in state $f(q)$ behaves the same as \(G\) does in state $q$, provided that afterward the agent performs a traversal sequence from \(\cIn\) to \(\cOut\) (which may involve the agent exiting and re-entering the gadget, but only via nonchecking locations).
The gadget can do essentially anything in a traversal sequence not ending in $\cOut$.

Any simply checkable $G$ is also a checkable $G$: if $G'$ is a simply checkable $G$, then $\PostSelect{G',[\cIn\to\cOut],L(G)}$ is equivalent to $G$.

We show that a simply checkable $G$ can nonlocally simulate $G$
while preserving planarity, using an auxiliary gadget.
First, define the $\hallway$ gadget to be the one-state two-location gadget
with transitions in both directions between the locations
(i.e., a ``branching hallway'' with only two locations).
A \emph{checkable hallway crossover} is a simply checkable hallway
where the added locations \(\cIn\) and \(\cOut\)
are not adjacent in the cyclic order,
i.e., they interleave with the two hallway locations.
For example, the weak closing crossover
from Figure~\ref{fig: weak-closing-crossover_state}
is a checkable hallway crossover,
where the horizontal traversal corresponds to the hallway,
the bottom location is $\cIn$, and the top location is $\cOut$.

\begin{lemma}
  \label{lem:checkable nonlocal}
  Let \(G'\) be a simply checkable $G$ and let \(\CHX\) be a checkable hallway crossover.
  Then
  \begin{enumerate}
  \item \(\{G'\}\) nonlocally simulates \(G\); and
  \item \(\{G', \CHX\}\) planarly nonlocally simulates \(G\).
  \end{enumerate}
\end{lemma}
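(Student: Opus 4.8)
The plan is to reduce reachability with $\{H\}\cup\mathcal G'$ to reachability with $\{G'\}\cup\mathcal G'$ (respectively $\{G',\CHX\}\cup\mathcal G'$) by taking an arbitrary system $S$ over $\{H\}\cup\mathcal G'$ and transforming it into a system $S'$ over the simulating set, where the idea is to delay all ``checking'' until the very end. Concretely, for each copy of $H$ in $S$, I would replace it by a copy of $G'$, mapping the shared locations $L(H)=L(G)$ via the identity and setting the initial state to $f(q_0)$ where $q_0$ was the initial state of that $H$-copy. The two new locations $\cIn,\cOut$ of each $G'$-copy are not yet connected to anything in $S$; instead I chain them together: introduce a new ``start-of-checking'' vertex, route it to $\cIn$ of the first $G'$-copy, route $\cOut$ of the $i$-th copy to $\cIn$ of the $(i{+}1)$-st, and route $\cOut$ of the last copy to the original goal location $t$ of $S$. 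The new goal is still $t$; the new start is still $s$. (For part~2, the difficulty is that splicing in these long serial connections through a planar system will force the checking wire to cross the existing hallways and gadget boundaries; that is exactly what the $\CHX$ gadget is for — each crossing of the checking path over an existing hallway is realized by a checkable hallway crossover, whose own $\cIn/\cOut$ pair gets spliced into the global checking chain as well. I would need to fix an explicit planar embedding of $S$, route a single non-self-crossing curve visiting all the $\cIn,\cOut$ ports in some order, and then charge each crossing with a hallway or with a $G'$-boundary to a fresh $\CHX$; since $S$ has polynomially many gadgets and hallways, only polynomially many $\CHX$ copies are needed, and the reduction runs in polynomial time.)

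Next I would prove correctness of the reduction, i.e.\ that $s\to^* t$ is solvable in $S$ iff $s\to^* t$ is solvable in $S'$. For the forward direction, suppose the agent has a legal system traversal $s\to^* t$ in $S$. Restricted to each $H$-copy this is a legal traversal sequence $X$ from the copy's initial state; by Definition~\ref{def:simplycheckable}\eqref{def:simplycheckable:can-extend}, $X\cdot[\cIn\to\cOut]$ is legal for the corresponding $G'$-copy from $f(q_0)$, and similarly each $\CHX$-copy, having only been used as a plain hallway so far, still admits its $\cIn\to\cOut$ checking traversal (a checkable hallway crossover is in particular a simply checkable hallway, so property~\eqref{def:simplycheckable:can-extend} applies with the empty prefix). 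So in $S'$ the agent first mimics the original traversal to reach the old location $t$, then — instead of stopping — walks down the checking chain, performing the $\cIn\to\cOut$ traversal in each $G'$-copy and each $\CHX$-copy in turn, arriving at the new goal $t$. Because the checking chain is traversed in a fixed linear order and each segment is independently legal, the whole thing is legal.

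For the reverse direction, suppose the agent solves $s\to^* t$ in $S'$. The only edge into the final goal is the last segment of the checking chain, which forces the agent to exit $\cOut$ of the last $G'$-copy; to reach that $\cIn$, it must have exited $\cOut$ of the previous copy, and so on back up the chain, so every $\cIn\to\cOut$ checking traversal must occur, and in this order. Now I apply Definition~\ref{def:simplycheckable}\eqref{def:simplycheckable:end} to each $G'$-copy: any legal traversal sequence for that copy ending at its $\cOut$ has the form $X\cdot[\cIn\to\bullet,\dots,\bullet\to\cOut]$ with $X$ legal for $H$ from the copy's initial state and the intermediate $\bullet$'s lying in $L(H)$. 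The key structural point to nail down is that the agent cannot enter a $G'$-copy's $\cIn$, do part of the $\cIn\to\cdots$ excursion, leave via some $L(H)$ location, wander off, and come back: property~\eqref{def:simplycheckable:end} only constrains sequences ending at $\cOut$, but since the agent never re-visits $\cOut$ after passing it in the chain, the portion of the copy's traversal sequence up to and including the chain's $\cOut$-visit is a legal sequence ending at $\cOut$, hence of the required form — so deleting the checking suffixes from all copies leaves, in each $H$-copy, a legal traversal sequence $X$ for $H$, and these $X$'s fit together into a legal system traversal $s\to^* t$ in $S$ (the $\CHX$-copies, being hallways with crossover, contribute nothing except the allowed free left-right passage, which is precisely the hallway behavior $S$ expected). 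The main obstacle I anticipate is this last bookkeeping in the planar case: making precise the routing of the checking curve and the charging of each crossing to a $\CHX$, and checking that interleaving the $\CHX$ gadgets' own $\cIn/\cOut$ ports into the global chain does not create a cyclic dependency — it does not, because we can order the whole chain so that each $\CHX$ used to cross ``under'' a later segment is itself checked earlier, which is possible since the checking chain is one curve and crossings can be linearized along it.
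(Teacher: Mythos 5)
Your overall strategy is the paper's: replace each copy of $G$ by a copy of $G'$ in state $f(q_0)$, chain the $\cIn/\cOut$ ports of all checkable gadgets (including the $\CHX$ copies inserted at crossings) into one serial checking path, and argue correctness via properties (3) and (4) of Definition~\ref{def:simplycheckable} plus backward induction along the chain. However, the wiring of that chain is backwards in a way that makes the reduction unsound as written. You attach the chain as: new isolated ``start-of-checking'' vertex $\to$ $\cIn$ of the first copy, \dots, $\cOut$ of the last copy $\to$ $t$, and you keep $t$ as the goal. Since $t$ retains all of its original connections, the agent can reach $t$ in $S'$ without ever performing a single checking traversal; property~(4) constrains only traversal sequences that end at $\cOut$, so nothing prevents the agent from exploiting ``broken'' behaviors of the $G'$ copies to reach $t$ even when $S$ has no solution. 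Your claim in the reverse direction that ``the only edge into the final goal is the last segment of the checking chain'' is false for your construction. Moreover, the chain as wired cannot even be walked in the direction your forward argument assumes: the agent standing at $t$ is adjacent to $\cOut$ of the last copy, not to $\cIn$ of the first, and the isolated start vertex is unreachable, so the intended solution you describe (``reach $t$, then walk down the checking chain'') does not exist in your $S'$.

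The fix is exactly the paper's wiring: connect the \emph{old goal} $t$ to $\cIn$ of the first checkable gadget, connect $\cOut$ of each to $\cIn$ of the next, and declare the \emph{new goal} to be $\cOut$ of the last checkable gadget. Then reaching the new goal forces the agent to have visited $t$ first and to have performed every checking traversal in order, which is what the backward induction needs. With that correction, the rest of your argument --- the forward direction via property~(3), the observation that the prefix of each copy's traversal sequence up to its $\cOut$-visit is constrained by property~(4), the deletion of checking suffixes, and the treatment of $\CHX$ hallway traversals (which all occur before any check traversals, so no cyclic dependency arises) --- matches the paper's proof.
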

\begin{proof}
  For any gadget set $\mathcal G'$,
  we construct a polynomial-time reduction
  from reachability with $\{G\}\cup\mathcal G'$
  to reachability with $\{G'\}\cup\mathcal G'$,
  or from planar reachability with $\{G\}\cup\mathcal G'$
  to planar reachability with $\{G', \CHX\}\cup\mathcal G'$.
  Suppose we have a [planar] system $S$ of gadgets from $\{G\}\cup\mathcal G'$,
  along with a designated starting location \(s\) and target location \(t\).
  Let $G_1, \dots, G_n$ denote the copies of $G$ in~$S$,
  and let $q_1, \dots, q_n$ be their respective initial states in~$S$.
  We build a new system \(S'\) of gadgets from $\{G'\}\cup\mathcal G'$ as follows; refer to Figure~\ref{fig:checkable nonlocal}.
  \begin{enumerate}
  \item Replace each copy $G_i$ of gadget $G$ with initial state $q_i$ in $S$
    by a corresponding copy $G'_i$ of $G'$ with initial state $f(q_i)$,
    whose copies of $\cIn$ and $\cOut$ are named $\cIni i$ and $\cOuti i$.
  \item Connect \(t\) to \(\cIni1\).
    In the planar case, we place a copy of $\CHX$ on each crossing this creates, with the check line on the way from $t$ to $\cIni1$.
  \item Connect \(\cOuti i\) to \(\cIni{i+1}\) for each $i$.
    In the planar case, we place a copy of $\CHX$ on each crossing this creates, with the check line on the way from $\cOuti i$ to $\cIni{i+1}$.
  \end{enumerate}
  Our reduction outputs this new system $S'$ along with
  the same start location $s$ and
  the new target location $t' = \cOuti n$.

  \begin{figure}
    \centering
    \includegraphics{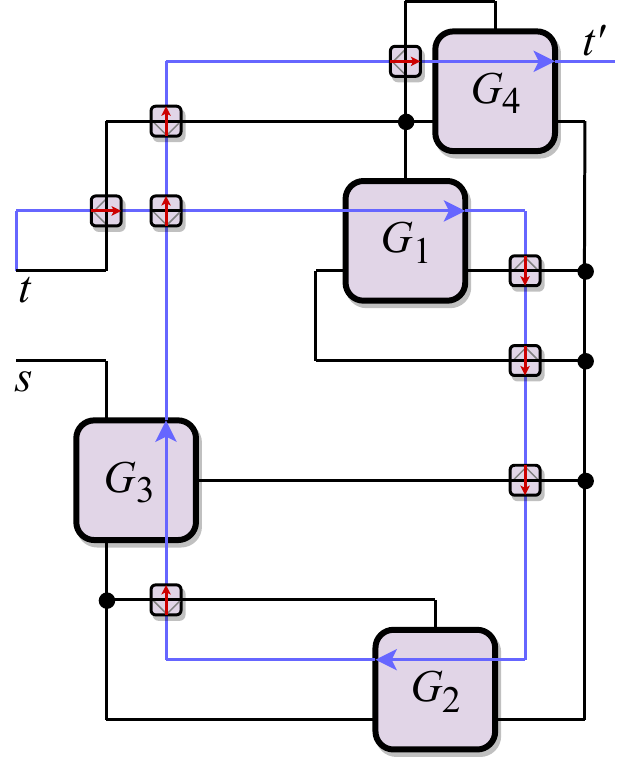}
    \caption{Our nonlocal simulation for the proof of Lemma~\ref{lem:checkable nonlocal}. The system is modified by replacing each copy of $G$ with a copy of $G'$ and adding the blue path from $t$ through $\cIn\to\cOut$ on each one.}
    \label{fig:checkable nonlocal}
  \end{figure}

  This construction clearly takes polynomial time.
  To prove that the reduction is valid, we must show that
  there is a legal system traversal $s \to^* \cOuti n$ in \(S'\)
  if and only if
  there is a legal system traversal $s \to^* t$ in \(S\).

  First suppose there is a legal system traversal $s \to^* t$ in $S$.
  Then this solution can be extended to a legal system traversal
  $s \to^* \cOuti n$ in $S'$ by appending the traversal
  \(\cIni i \to \cOuti i\) on $G'_i$
  for each $i$ in increasing order,
  and in the planar case,
  adding the needed traversals of the inserted copies of $\CHX$
  (including the check traversals needed to get from $t$ to $\cIni 1$ and
  from each $\cOuti i$ to $\cIni{i+1}$).
  The appended $\cIni i \to \cOuti i$ traversals
  are all valid because Property~\ref{def:simplycheckable:can-extend} of
  Definition~\ref{def:simplycheckable} requires that
  any legal traversal sequence for \(G\) can be extended by
  \(\cIn \to \cOut\) to yield a legal traversal sequence for \(G'\).
  For the same reason, the appended $\cIn \to \cOut$ traversals
  in copies of $\CHX$ are valid.
  Also, the inserted hallway traversals of the copies of $\CHX$
  are all valid from the definition of checkable hallway crossover,
  because they occur before all appended $\cIn \to \cOut$ traversals.

  Now suppose that there is a legal system traversal $s \to^* \cOuti n$ in $S'$.
  Define $\pIni i, \pOuti i$ to be the check in and out locations
  for all checkable gadgets (copies of both $G'$ and $\CHX$), in the order that
  these check traversals occur in the intended solution described above.
  By Property~\ref{def:simplycheckable:end} of Definition~\ref{def:simplycheckable},
  the agent can only exit the $i$th checkable gadget ($G'$ or $\CHX$)
  at \(\pOuti i\)
  if it previously entered at the corresponding \(\pIni i\).
  In $S'$, the only location connected to $\pIni{i+1}$ is $\pOuti i$
  (ignoring hallway traversals of $\CHX$ gadgets),
  so this property implies that $\cOuti i$ was previously visited as well.
  By induction, the solution must have reached $\pIni 1$ via \(t\),
  and then traversed all of the \(\pIni i\) and \(\pOuti i\) locations
  (possibly with some detours).
    Consider the prefix $X'$ of the solution up to the first time $t$ is visited,
  and let $X$ be the modification to remove any hallway traversals
  of the copies of $\CHX$.
  We claim $X$ is a solution for $S$.
  Clearly $X$ is a system traversal $s \to^* t$
  and satisfies all unmodified gadgets (from $\mathcal G'$).
  By Property~\ref{def:simplycheckable:end} of Definition~\ref{def:simplycheckable},
  $\pIni i$ and $\pOuti i$ are visited at most once in the full solution,
  and the prefix of the solution prior to visiting $\pIni i$ is legal for
  the $i$th checked gadget.
    Because each $\pIni i$ is visited after $t$, it is not visited in $X$,
  and thus $X$ is legal for $G_i$.
  Similarly, $X$ makes only hallway traversals of $\CHX$, so removing those
  traversals is valid in $S$ where there were direct connections
  before the crossings were introduced.
  Therefore $X$ is a valid system traversal $s \to^* t$ in~$S$.
\end{proof}

\subsection{Postselected Gadgets}
\label{sec:Postselected Gadgets}

We now finally prove our main result, Theorem~\ref{thm:postselect}:
postselection can be achieved using only the two base gadgets
from Section~\ref{sec:Base Gadgets}, while preserving planarity.

It will be convenient to assume all of our gadgets are \emph{transitive}:
if there are two transitions $(q_1,\ell_1)\to(q_2,\ell_2)\to(q_3,\ell_3)$,
then there is also a transition $(q_1,\ell_1)\to(q_3,\ell_3)$.
For reachability, this makes no difference:
we can replace any gadget with its transitive closure without affecting the answers to any reachability problems,
since we can always think of the transition $(q_1,\ell_1)\to(q_3,\ell_3)$ as a sequence of two transitions.
That is, every gadget is equivalent for reachability to some transitive gadget,
and in particular there are nonlocal simulations in both directions.

\begin{proof}[Proof of Theorem~\ref{thm:postselect}]
  Assume without loss of generality that $G$ is transitive,
  by replacing $G$ with its transitive closure.

  We will show that $\{G,\SO,\MSC,\SD,\SX,\WCX\}$ planarly \textit{locally}
  simulates some gadget $G'$ which is a simply checkable $\PostSelect{G,C,L'}$.
  As shown in Section~\ref{sec:Base Gadgets}
  (Figures~\ref{fig: single-use-crossover}
  and~\ref{fig: weak-closing-crossover} in particular),
  $\{\SO,\MSC\}$ planarly locally simulates $\WCX$, $\SX$, and $\SD$.
  By combining these local simulations, we obtain that
  $\{G,\SO,\MSC\}$ planarly locally simulates the same $G'$.
  By Lemma~\ref{lem:simulation reduction}, this is also a nonlocal simulation.
  By Lemma~\ref{lem:checkable nonlocal},
      for any checkable hallway crossover gadget $\CHX$,
  $\{G',\CHX\}$ planarly nonlocally simulates~$G'$.
  Because $\{\SO,\MSC\}$ planarly simulates the weak closing crossover
  (Figure~\ref{fig: weak-closing-crossover}),
  which is a checkable hallway crossover,
  it follows from Lemma~\ref{lem:compose nonlocal} that
  $\{G,\SO,\MSC\}$ planarly nonlocally simulates $\PostSelect{G,C,L'}$,
  proving the theorem.

  Now we show that $\{G,\SO,\MSC,\SD,\SX,\WCX\}$ planarly locally simulates
  some gadget $G'$ which is a simply checkable $\PostSelect{G,C,L'}$.
  Unpacking the definitions of ``simply checkable'' and \textsf{Postselect},
  we must simulate a gadget \(G'\) that satisfies the following properties:
  \begin{enumerate}
  \item $L(G')=L'\sqcup\{\cIn,\cOut\}$.
  \item There is a function $f$ from unbroken states of $G$ to states of $G'$.
  \item For any traversal sequence $X$ on $L'$, if \(XC\) is legal for $G$ from state $q$, then \(X \cdot [\cIn \to \cOut]\) is legal for $G'$ from state $f(q)$.
  \item Any traversal sequence that ends with \(\cOut\) and is legal for \(G'\) from state $f(q)$ has the form \(X \cdot [\cIn \to \bullet, \bullet \to \bullet, \dots, \bullet \to \cOut]\), where \(X\) is a traversal sequence on $L'$, \(XC\) is legal for $G$ from state $q$, and all the omitted $\bullet$ locations are in \(L'\).
  \end{enumerate}
    We construct our simulation of the gadget \(G'\) starting from \(G\)
  as follows; refer to Figure~\ref{fig: postselection}.
  \begin{enumerate}
  \item For purposes of description,
    orient so that $G$ has all of its locations on the top of its bounding box.
    We will place the locations for the simulated gadget
    on a horizontal line $L$ above $G$
    (so they will lie on the outside face).
  \item For each location \(l \in L'\),
    add a long upward edge \(e_l\) connecting \(l\) in \(G\)
    to a new location $l'$ on $L$.
        Because the edges are all vertical, they do not cross each other, and
    the $l'$ locations appear in the same cyclic (left-to-right) order
    as $l \in L'$.
  \item Place \(\cIn\) on $L$ left of all \(e_l\) edges.
    Starting from \(\cIn\), draw a non-self-crossing path
    that crosses each of the \(e_l\) in one rightward pass,
    then turn down,
    then cross each \(e_l\) a second time in one leftward pass
    in between the first pass and~$G$.
    We ensure any further crossings with the edges \(e_l\)
    take place between these two delimiter passes,
    which we call the top and bottom delimiters,
    by routing paths across the bottom delimiter before crossing any \(e_l\).
    These delimiters serve to ``cut off'' the rest of the construction, preventing leakage.
  \item For each traversal \(a_i \to b_i\) in the sequence \(C = [a_1 \to b_1,\dots,a_k \to b_k]\),
    add a single-use opening gadget $O_i$ and a dicrumbler $D_i$,
    near locations $b_i$ and $a_i$ respectively.
    Connect the opening location of $O_i$ to the entrance of $D_i$
    (routing up across the bottom delimiter, then horizontally, then down).
    Connect the exit of $D_i$ to $a_i$,
    and connect $b_i$ to the entrance of $O_i$.
  \item Connect the exit of each \(O_i\) to the opening location of \(O_{i+1}\),
    routing up across the bottom delimiter, then all the way left,
    then up, then right, then down.
  \item Finally, connect \(\cIn\) to the opening location of \(O_1\)
    after the two delimiter passes; and
    connect the exit of \(O_k\) to \(\cOut\),
    routing up across the bottom delimiter, then all the way left, then up.
  \end{enumerate}

  We call the path we have constructed from $\cIn$ to $\cOut$ the \emph{checking path}. For an unbroken state $q$ of $G$, the corresponding state $f(q)$ of $G'$ is simulated by placing $G$ in state $q$ and all other gadgets in their usual initial states.

\begin{figure}[t]
  \centering
  \begin{overpic}[width=0.7\linewidth]{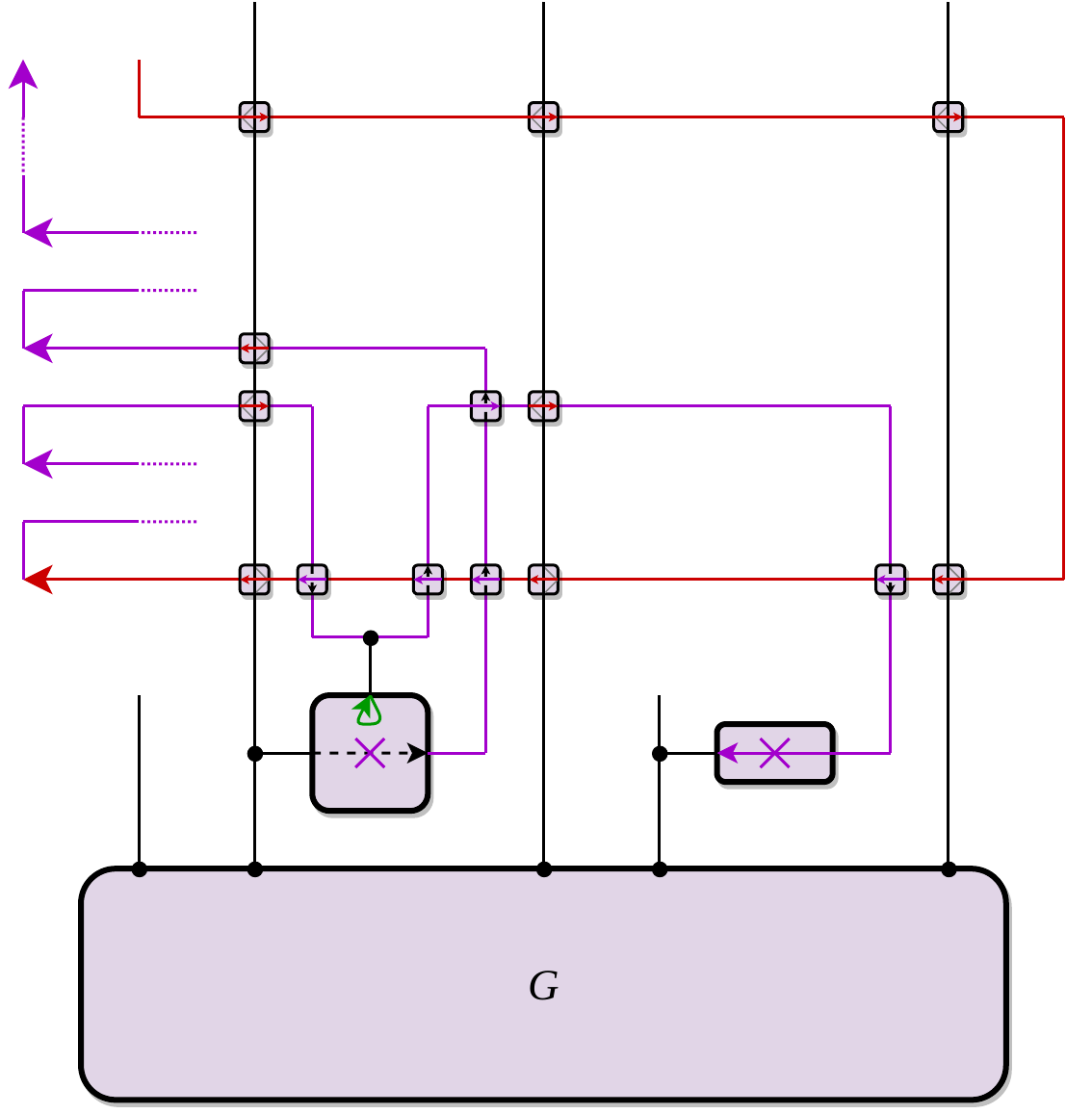}
    \put(13,98){\makebox(0,0)[c]{\strut \(\cIn\)}}
    \put(2,98){\makebox(0,0)[c]{\strut \(\cOut\)}}
    \put(13,19){\makebox(0,0)[c]{\strut \(l_1\)}}
    \put(23,19){\makebox(0,0)[c]{\strut \(l_2\)}}
    \put(49,19){\makebox(0,0)[c]{\strut \(l_3\)}}
    \put(59.5,19){\makebox(0,0)[c]{\strut \(l_4\)}}
    \put(85.5,19){\makebox(0,0)[c]{\strut \(l_5\)}}
    \put(34,24.5){\makebox(0,0)[c]{\strut \(O_i\)}}
    \put(70,26.5){\makebox(0,0)[c]{\strut \(D_i\)}}
    \put(6,54.5){\makebox(0,0)[l]{\footnotesize to $O_1, D_1$}}
    \put(6,59.75){\makebox(0,0)[l]{\footnotesize from $O_{i-1}$}}
    \put(6,65){\makebox(0,0)[l]{\footnotesize to $O_i, D_i$}}
    \put(6,70.25){\makebox(0,0)[l]{\footnotesize from $O_i$}}
    \put(6,75.5){\makebox(0,0)[l]{\footnotesize to $O_{i+1}, D_{i+1}$}}
    \put(6,80.75){\makebox(0,0)[l]{\footnotesize from $O_k$}}
  \end{overpic}
  \caption{The simulation of a simply checkable, postselected version of the gadget \(G\).  The two initial crossings of the edges \(e_l\) connecting locations in \(L'\) to the outside are shown in red.  The rest of the checking path is shown in purple.  All further crossings of the checking path with edges \(e_l\) occur between the two initial crossings.  In this example, \(L = \{l_1, l_2, l_3, l_4, l_5\}\) and \(L' = \{l_2, l_3, l_5\}\).  The \(i\)th checking traversal \([l_4 \to l_2]\) is enforced by \(O_i\) and \(D_i\).}
  \label{fig: postselection}
\end{figure}

  This construction is nonplanar in two ways: our new checking path crosses the edges \(e_l\) and also crosses itself.  In the former case we replace the crossing with a weak closing crossover, oriented so that the checking path closes \(e_l\).  In the latter case we replace the crossing with a single-use crossover, oriented correctly so that the agent can traverse the two directions in the expected order detailed below.  We must prove this construction has the properties stated above. By construction, its locations are $L'\sqcup\{\cIn,\cOut\}$.

  Suppose \(XC\) is legal for $G$ from state $q$.  We can perform \(X \cdot [\cIn \to \cOut]\) in the simulation where $G$ starts in $q$ by first performing \(X\) in the natural way (using the edges \(e_l\)) and then following the checking path: starting at \(\cIn\), for each \(i\) we visit the opening location of \(O_i\), then go through \(D_i\), then traverse \(a_i \to b_i\) via \(G\), then traverse \(O_i\).  This path brings us to \(\cOut\) at the end, and its restriction to \(G\) is exactly \(XC\).

  Now suppose that there is a legal traversal sequence for \(G'\) from state $f(q)$ ending in \(\cOut\).  Putting ourselves in the position of a forgetful agent, we find ourselves at \(\cOut\) and must determine how we got there.
  We can induct backwards along the checking path (as in the proof of Lemma~\ref{lem:checkable nonlocal}) to show that we must have visited \(\cIn\), using the facts that in order to exit the closing side of a weak closing crossover we must have entered it on the opposite side, and that in order to exit from \(O_i\) we must have visited its opening location.

  Thus at some point in the path we entered \(G'\) through \(\cIn\), crossed all the \(e_l\) twice, and then for every \(a_i \to b_i\) of \(C\) in order we opened \(O_i\), traversed \(D_i\), and later traversed \(O_i\). Crossing each $e_l$ twice closes the weak closing crossovers, making $e_l$ no longer traversable.  Between traversing \(D_i\) and \(O_i\), we somehow must have gotten from \(a_i\) to \(b_i\).  We cannot have used the edges \(e_l\) because they were already closed during the initial crossings.  
  So we must have made transitions only in $G$, of the form $(q_1,\ell_1=a_i)\to(q_2,\ell_2)\to\dots\to(q_k,\ell_m=b_i)$. 
  Since $G$ is transitive, we could equivalently have made the single transition $(q_1,a_i)\to(q_k,b_i)$,
  and in particular have traversed $a_i\to b_i$.

Similarly, after the initial two crossings of the \(e_l\), we can't have left this simulated gadget or entered \(G\) except for the traversals of \(C\).  Finally, we take advantage of the fact that before entering \(\cIn\), the simulation behaves exactly like \(G\) except that only locations in $L'$ are accessible.  So the full path through the simulation $G'$ ending at $\cOut$ must have the following form:
\begin{enumerate}
\item We use \(G'\) as if it were \(G\) (restricted to the locations of \(L'\)) with initial state $q$, performing some traversal sequence $X$.
\item We enter \(G'\) through \(\cIn\).
\item We possibly leak out of \(G'\) or into \(G\) via locations in \(L'\), through the weak closing crossovers at the initial two crossings with each \(e_l\). Call the sequence of traversals made during this phase $Y$.
\item Eventually, we finish all of initial crossings with $e_l$, and moved to the $O_i$s and $D_i$s.
\item We perform the traversal sequence \(C\) in \(G\) without any additional traversals in $G$ in between and without leaving \(G'\).
\item Finally, we leave \(G'\) through \(\cOut\).
\end{enumerate}
Therefore the sequence of traversals on $G'$ has the form \(X \cdot [\cIn \to \bullet, \bullet \to \bullet, \dots, \bullet \to \cOut]\) and the sequence of traversals just on $G$ is $XYC$, where $X$ and $Y$ are traversal sequences on $L'$ and the omitted $\bullet$ locations are in $L'$. In particular, $XYC$ is legal for $G$ from state $q$, so by the assumption that broken states are preserved by transitions on $L'$, $XC$ is legal for $G$ from $q$. This is the final condition we needed, so $G'$ is a simply checkable $\PostSelect{G,C,L'}$.
\end{proof}

\section{BoxDude is PSPACE-complete}
\label{sec:BoxDude}

We now show that BoxDude is PSPACE-complete via a reduction from reachability with nondeterministic locking 2-toggles.  In this model, boxes can be pushed horizontally by the Dude but cannot be picked up.  We will make use of the postselection construction from Section~\ref{sec:checkable gadget framework} in order to nonlocally simulate nondeterministic locking 2-toggles.

Similarly to BlockDude we must build a branching hallway in order to connect the locations of our gadgets.  This time, we also build a directed crossover gadget.
These gadgets are shown in Figure~\ref{fig: box hallways}. Directed crossovers can be used to construct undirected crossovers as in Figure~\ref{fig: directed crossover}.
This allows us to connect locations in nonplanar ways, and reduce from reachability instead of planar reachability.  We note a diode gadget is easy to build by simply having a height 2 drop.

\begin{figure}
  \centering
  \subcaptionbox{Branching Hallway}{\includegraphics[scale=0.7]{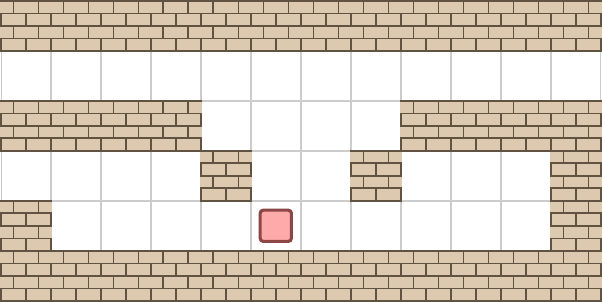}} \hspace{4em}
  \subcaptionbox{Directed Crossover}{\includegraphics[scale=0.7]{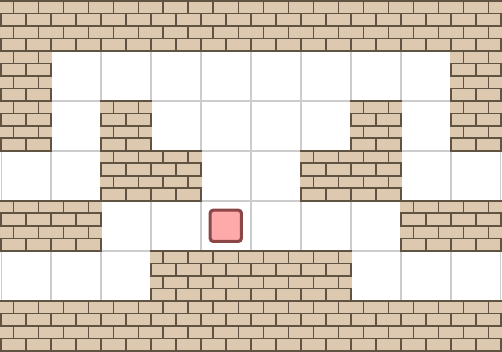}}
  \caption{Hallway connection gadgets for BoxDude.  Pushable boxes are in red. The branching hallway gadget is fully traversable from any of its three locations to the others.  The directed crossover can be traversed only from bottom-left to top-right or from bottom-right to top-left.}
  \label{fig: box hallways}
\end{figure}

\begin{figure}
	\centering
	\subcaptionbox{Directed Crossover Icon}{\includegraphics[scale=1]{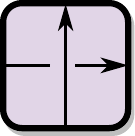}} \hfil
	\subcaptionbox{Crossover Icon}{\includegraphics[scale=1]{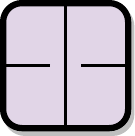}}  \hfil
	\subcaptionbox{Construction of an undirected crossover from a directed crossover.}{\includegraphics[scale=1]{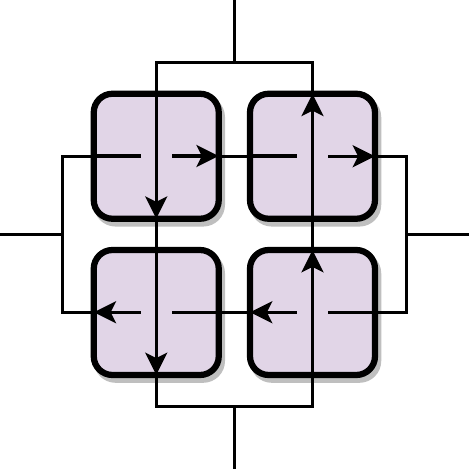}}
	\caption{Icons for directed and undirected crossovers. The undirected crossover can be constructed from four directed crossovers as shown in \cite{demainepush2F}.}
	\label{fig: directed crossover}
\end{figure}

Postselection requires us to additionally simulate the gadgets SO and MSC.
These gadgets are shown in Figure~\ref{fig: box single use doors}.

\begin{figure}
  \centering
  \subcaptionbox{SO}{\includegraphics[scale=0.8]{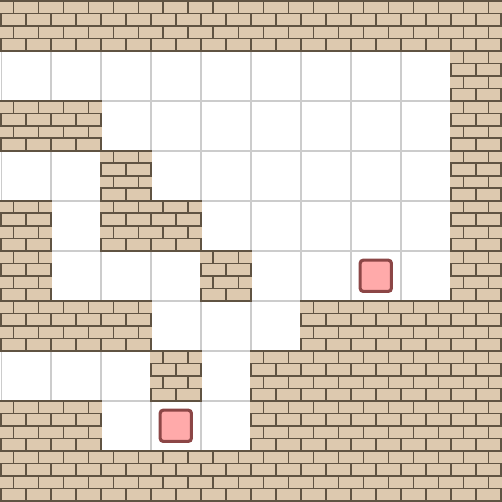}} \hspace{4em}
  \subcaptionbox{MSC}{\includegraphics[scale=0.8]{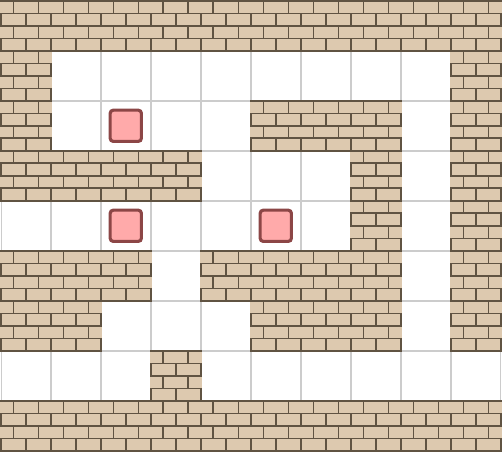}}
  \caption{SO and MSC gadgets for BoxDude.}
  \label{fig: box single use doors}
\end{figure}

Next we build a checkable \emph{leaky door} gadget.  A leaky door has two states (``open'' and ``closed''), and three locations, called ``opening'', ``entrance'', and ``exit''.  Similar to a self-closing door \cite{doors}, the gadget can be traversed in the open state from entrance to exit, but doing so transitions the door to the closed state.  In the closed state, it is not possible to enter the gadget through the entrance at all, but visiting the opening location allows the gadget to transition back to the open state.  Unlike a self-closing door, it is possible to go from the entrance to the opening location when the gadget is in the open state.  It is also always possible to go from the opening location to the exit, but doing so transitions the door to the closed state.  The full state diagram for the leaky door is shown in Figure~\ref{fig: leaky door state}.

\begin{figure}
  \centering
  \includegraphics{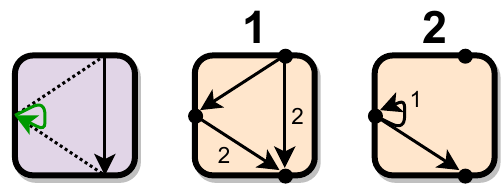}
  \caption{Icon and state diagram for the leaky door gadget.}
  \label{fig: leaky door state}
\end{figure}

The checkable leaky door is shown in Figure~\ref{fig: box leaky door}.  We apply postselection to this gadget with the checking traversal sequence
\([\text{opening} \to \text{opening},\text{entrance} \to \text{opening}]\).\footnote{The first check from \text{opening} to \text{opening} does not enforce anything but merely allows access to the location in case the gadget was last left in the closed state. The check from \text{entrance} to \text{opening} cannot be done if the gadget is in the closed state.}
We now analyze which states are \emph{broken} in the sense that this traversal sequence is impossible from those states.

\begin{itemize}
\item If the left box is further to the left than its current location, the gadget state is broken since the entrance is unusable.
\item If the left box is more than one square to the right of its current location, the gadget state is broken because the opening location is unreachable from the entrance.
\item If the two boxes are adjacent, the gadget state is broken for the same reason.
\end{itemize}

Moving the right box more than one square to the right is never advantageous for the player, so we assume it does not occur.  

We will show that the postselection of this gadget is exactly the leaky door gadget.  When the right box is in its current location, we say that the gadget is in the closed state; when it is one square to the right the gadget is in the open state.  Because the left box cannot move more than one square to the right, it follows that any traversal to the exit location must leave the gadget in the closed state.  In the closed state, no traversals are possible from the entrance without breaking the gadget by putting two boxes adjacent.  Visiting the opening allows transitioning to the open state.  In the open state, additional traversals are available from the entrance.  The agent may go from entrance to exit by using the connected opening locations to reset the gadget to the closed state and then using the right block to reach the exit.  It is also possible to leak from the entrance to the opening location, and from the opening location to the exit (transitioning to the closed state).  Thus the traversals within unbroken states are exactly those allowed by the leaky door gadget.  By Theorem~\ref{thm:postselect} the checkable leaky door, along with the SO and MSC gadgets built earlier, nonlocally simulate the leaky door.

\begin{figure}
  \centering
  \includegraphics[scale=.8]{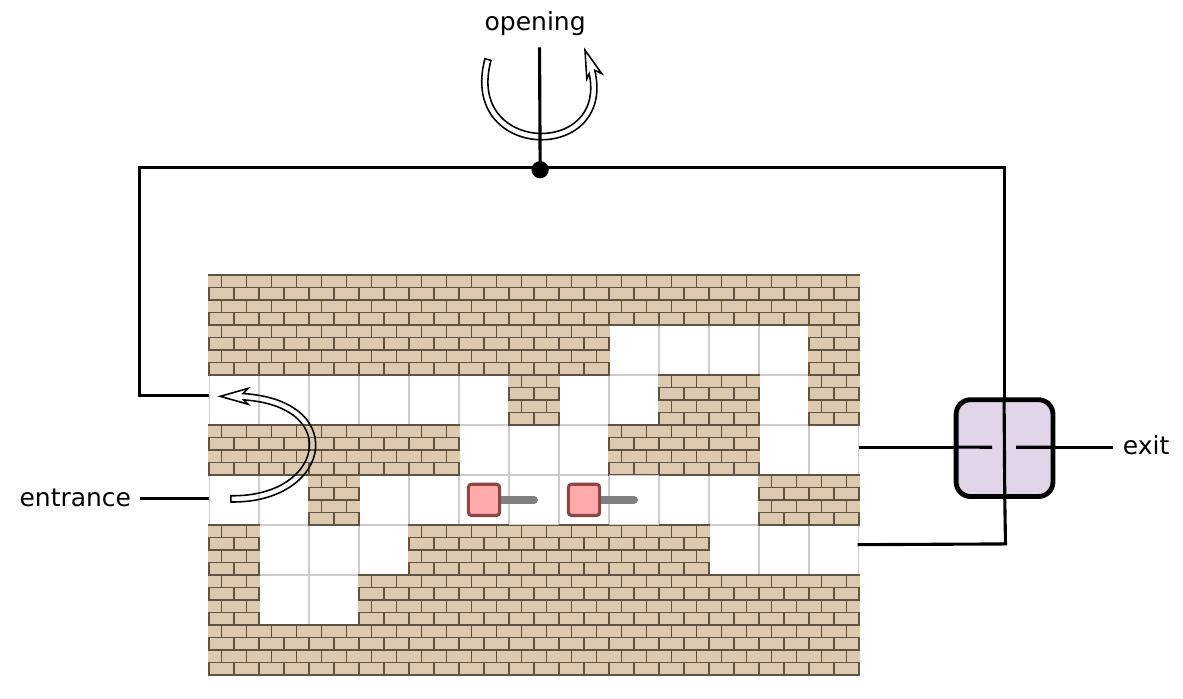}
  \caption{A checkable leaky door, shown in the closed state. The crossover and branching hallway needed to connect the top left and bottom right hallways have been abstracted.  Horizontal ``tracks'' display the range of locations for each box in unbroken states.  (The right box can move farther right but it is never advantageous to do this.)  The two boxes may not be adjacent in unbroken states.}
  \label{fig: box leaky door}
\end{figure}

We now build a 1-toggle gadget, shown in Figure~\ref{fig: 1-toggle_state}, using a pair of leaky doors. This construction is shown in Figure~\ref{fig: box 1-toggle}.  It can be seen that none of the leaks are useful to an agent traversing the gadget, since the most they accomplish is bringing the agent back to its starting location without changing any state.

\begin{figure}
  \centering
  \includegraphics{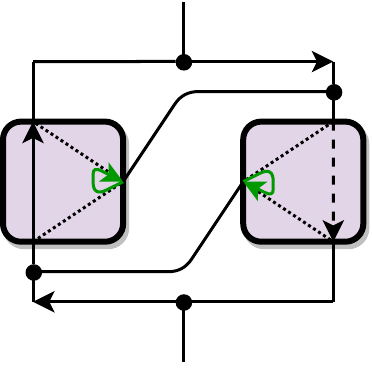}
  \caption{A 1-toggle built from leaky doors.  Solid or dashed arrows inside gadgets show the traversal from entrance to exit in an open or closed leaky door, respectively.  Green self-loops are opening locations of leaky doors.  Arrows outside gadgets are diodes.}
  \label{fig: box 1-toggle}
\end{figure}

We are now in a position to build a nondeterministic locking 2-toggle.
By Theorem~\ref{thm:nL2T_PSPACE},
reachability with this gadget is PSPACE-complete.
The final construction, shown in Figure~\ref{fig: box locking two toggle}, is quite simple in appearance; the complexity is hidden in the 1-toggles used to protect the locking 2-toggle's locations.  Traversing from A to B is only possible when the box is on the left side of the gadget, and conversely for C to D.  Since the box's position can only be changed when exiting the gadget through A or C (corresponding to which side the gadget is locked to), the gadget simulates a locking 2-toggle.  Note that this gadget cannot be broken by moving the box further to the left than its current position, since doing so renders the gadget fully untraversable.  This is because in this state location A is permanently unusable and B and D cannot be reached from inside the gadget.  The agent can only exit out of C, so that C's 1-toggle points inwards.  Since C's and D's 1-toggles always point in different directions, D is also permanently unusable.  The only remaining traversal is \(B \to C\), but this is impossible also because C's 1-toggle points inwards.

\begin{figure}
  \centering
  \vspace{1em}
  \begin{overpic}[scale=0.9]{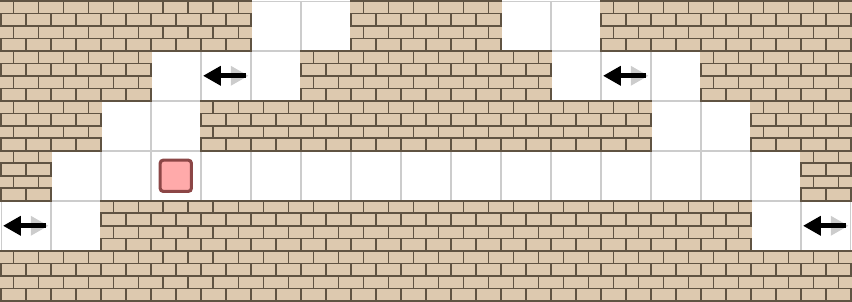}
    \put(-2,11){\makebox(0,0)[c]{\strut A}}
    \put(102,11){\makebox(0,0)[c]{\strut C}}
    \put(36,38){\makebox(0,0)[c]{\strut B}}
    \put(64,38){\makebox(0,0)[c]{\strut D}}
  \end{overpic}
  \caption{A nondeterministic locking 2-toggle, currently locked to the left side.  Locations B and C are protected with inwards-directed 1-toggles; locations A and D with outwards-directed 1-toggles.  (Note: the middle portion of the gadget would actually need to be wider than shown in this diagram in order to make enough space to route locations B and D away from each other.)}
  \label{fig: box locking two toggle}
\end{figure}

Using Theorem~\ref{thm:postselect} and Lemma~\ref{lem:compose nonlocal}, our simulations imply that the BoxDude gadgets we have explicitly built nonlocally simulate a nondeterministic locking 2-toggle. In particular, there is a polynomial-time reduction from planar reachability with nondeterministic locking 2-toggles, which is PSPACE-complete by Theorem~\ref{thm:nL2T_PSPACE}, to BoxDude. Hence BoxDude is PSPACE-complete.

\section{Push-1F is PSPACE-complete}
\label{sec:Push-1F}

In this section, we show that Push-1F is PSPACE-complete
using a reduction from planar reachability with self-closing doors
(the gadget shown in Figure~\ref{fig:self-closing door}),
which is PSPACE-complete by Theorem~\ref{thm:dir-self-closing-door-pspace}.
Recall that in this model there is no gravity, and the agent can push one block at a time in any direction.  We will make several uses of postselection from Section~\ref{sec:checkable gadget framework} in order to nonlocally simulate various gadgets along the way.

In order to use postselection, we must build single-use opening (SO) and merged single-use closing (MSC) gadgets.  We start by building a \emph{weak merged closing} gadget, based on the Lock gadget from \cite{Push100}.  The weak merged closing gadget  acts like the MSC except that the closing traversal can be performed multiple times.  We also use a gadget introduced in \cite{Push100} called a \emph{no-return} gadget.  After a no-return gadget is traversed from left to right, it cannot immediately be traversed from right to left.  However, initially traversing it from the right or traversing left to right twice breaks the gadget, making it fully traversable.  Finally, we build a \emph{weak opening} gadget.  A weak opening gadget's exit cannot be used in traversals until both of its input locations are visited separately.  Figure~\ref{fig:push base gadget states} shows the state diagrams for these gadgets, and Figure~\ref{fig:push base gadgets} shows how to implement them in Push-1F.

\begin{figure}
  \centering
  \def\scale{0.7}
  \subcaptionbox{\label{fig: weak closing state}\centering Weak merged closing}[3cm]{\includegraphics[scale=\scale]{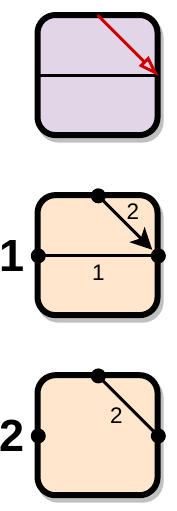}}
  \hspace{1cm}
  \subcaptionbox{\label{fig: no return state}\centering No-return}[3cm]{\includegraphics[scale=\scale]{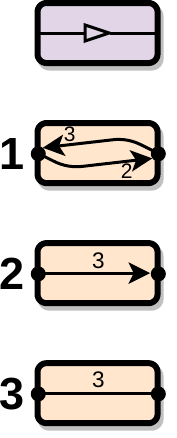}}
  \hspace{1cm}
  \subcaptionbox{\label{fig: weak opening state}\centering Weak opening}{\includegraphics[scale=\scale]{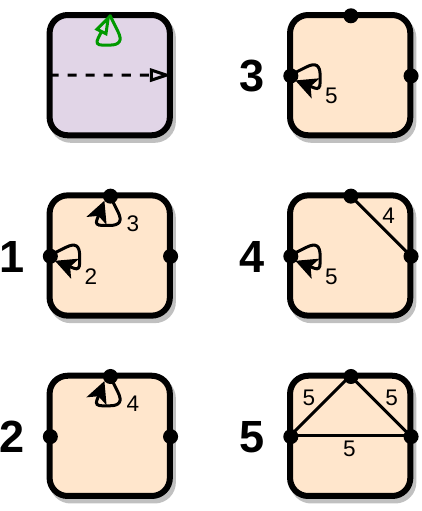}}
  \caption{Icons and state diagrams for Push-1F base gadgets.}
  \label{fig:push base gadget states}
\end{figure}

\begin{figure}
  \centering
  \def\scale{0.6}
  \subcaptionbox{\label{fig:push weak closing}\centering Weak merged closing}{\includegraphics[scale=\scale]{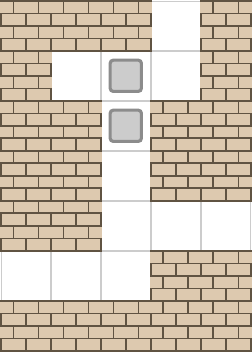}}
  \hfill
  \subcaptionbox{\label{fig:push no return}\centering No-return}{\includegraphics[scale=\scale]{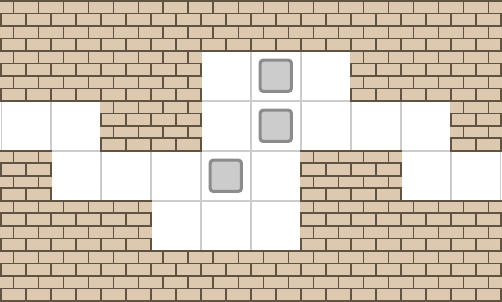}}
  \hfill
  \subcaptionbox{\label{fig:push weak opening}\centering Weak opening}{\includegraphics[scale=\scale]{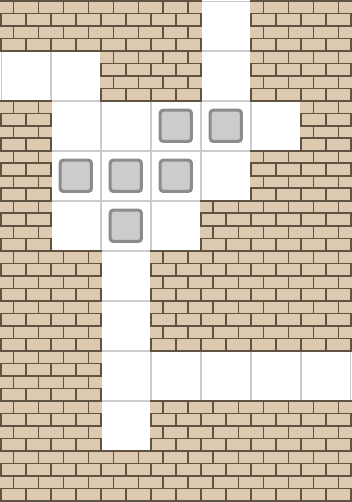}}
  \caption{Constructions of base gadgets for Push-1F.}
  \label{fig:push base gadgets}
\end{figure}

We combine the weak merged closing, no-return, and weak opening gadgets to make a dicrumbler; this allows us to simulate ordinary SO and MSC gadgets using the gadgets we have built so far.  These simulations are shown in Figure~\ref{fig:push SO and MSC}.  Having built these gadgets, we can now take advantage of the machinery of checkable gadgets.  The structure of the remaining gadget simulations used in this section is outlined in Figure~\ref{fig:push overview}.

\begin{figure}
  \centering
  \def\scale{0.6}
  \subcaptionbox{\centering Dicrumbler}{\includegraphics[scale=\scale]{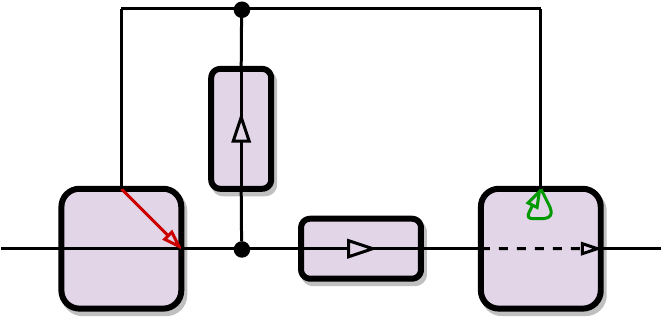}}
  \hfill
  \subcaptionbox{\centering SO}{\includegraphics[scale=\scale]{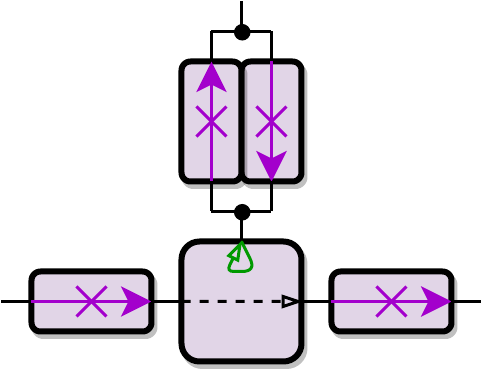}}
  \hfill
  \subcaptionbox{\centering MSC}{\includegraphics[scale=\scale]{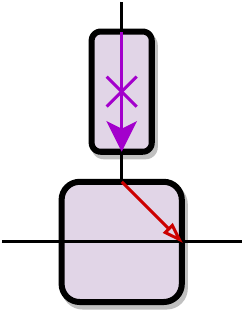}}
  \caption{Constructions of gadgets required for postselection in Push-1F.}
  \label{fig:push SO and MSC}
\end{figure}

\begin{figure}
  \centering
  \includegraphics{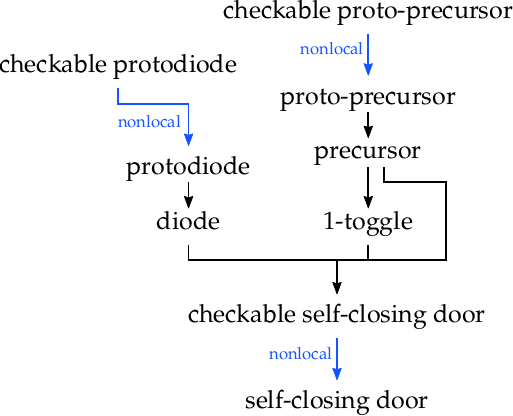}
  \caption{Overview of gadget simulations used for Push-1F.
    Black arrows show local simulations and blue arrows show nonlocal simulations.}
  \label{fig:push overview}
\end{figure}

We first nonlocally simulate a diode, which allows traversal in only one direction.  We accomplish this by building a checkable \emph{protodiode}, where the protodiode is a certain four-location gadget which easily simulates a diode.  Refer to Figure~\ref{fig:push diode}.  We apply postselection to the checkable protodiode with the checking traversals \([A \to C, D \to B]\) to nonlocally simulate the protodiode.  The nonbroken states are exactly those in which the block is confined to the middle two squares.  Connecting the bottom two locations of the protodiode yields a diode.

\begin{figure}
  \centering
    \subcaptionbox{\label{fig:push-diode-checking}\centering Checkable protodiode and checking traversals}{
  	\begin{overpic}[scale=0.8]{push-diode-check-with-lines}
  		\put(25,55){\makebox(0,0){\strut A}}
  		\put(58,55){\makebox(0,0){\strut B}}
  		\put(40, -6){\makebox(0,0){\strut C}}
  		\put(75, -6){\makebox(0,0){\strut D}}
  	\end{overpic}
  	\vspace{1em}
  }
  \hspace{2em}
  \subcaptionbox{Protodiode}{\includegraphics[scale=0.8]{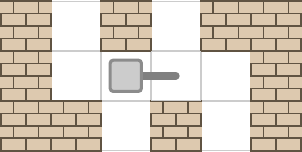}
    \vspace{1em}}
  \hspace{2em}
  \subcaptionbox{Diode}{\includegraphics[scale=0.8]{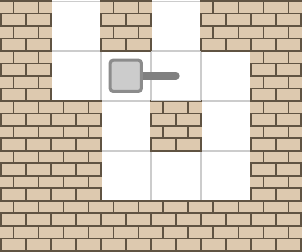}}
  \caption{Nonlocal diode simulation for Push-1F.  Horizontal tracks show where the block is allowed to move in the protodiode and diode, as if it is confined by a magical force.}
  \label{fig:push diode}
\end{figure}

We now nonlocally simulate a \emph{precursor} gadget, which will be used to build a 1-toggle and a checkable self-closing door.  The precursor's state diagram is shown in Figure~\ref{fig:push precursor state}.  We begin by building a checkable \emph{proto-precursor}, where again the proto-precursor is a certain gadget which easily simulates the precursor.  Refer to Fig~\ref{fig:push precursor}.  We apply postselection to the checkable proto-precursor with the checking traversals \([A \to D,C \to G,B \to A,B \to C]\).   We close off locations \(D\) and \(G\) during postselection by not including them in the set \(L' = \{A, B, C, E, F\}\) of locations on the proto-precursor.  The nonbroken states are exactly those in which the blocks are confined to the four center-most spaces, and the two blocks are not adjacent.  Entering a broken state is irreversible with respect to transitions on the locations in \(L'\) because \(D\) and \(G\) were excluded in \(L'\).  (If \(D\) or \(G\) were included then it would be possible to un-break the gadget from some broken states by pushing a block back into the center.)  Thus we can use postselection to nondeterministically simulate the proto-precursor; joining its upper three locations together yields the precursor gadget.  Additionally, closing the top location of the precursor gadget produces a 1-toggle.

\begin{figure}
  \centering
  \bigskip
  \subcaptionbox{\label{fig:push-precursor-checking}Checkable proto-precursor and checking traversals.  Locations excluded from \(L'\) are marked with an X.}{
	\begin{overpic}[scale=0.8]{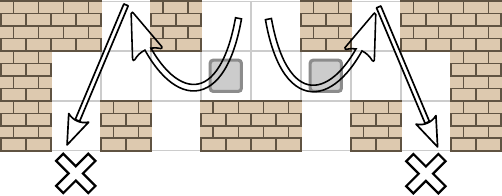}
	  \put(25,42){\makebox(0,0){\strut A}}
	  \put(50,42){\makebox(0,0){\strut B}}
	  \put(75,42){\makebox(0,0){\strut C}}
	  \put(15,-4){\makebox(0,0){\strut D}}
	  \put(35,-4){\makebox(0,0){\strut E}}
	  \put(65,-4){\makebox(0,0){\strut F}}
	  \put(85,-4){\makebox(0,0){\strut G}}
	\end{overpic}
	\vspace{1em}
  }
  \hspace{3em}
  \subcaptionbox{Proto-precursor}{\includegraphics[scale=0.8]{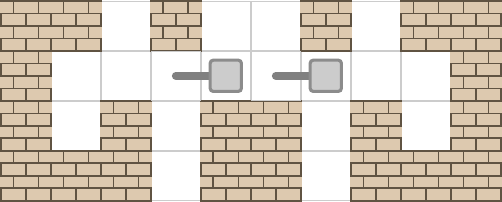}}
  \\
  \vspace{1em}
  \subcaptionbox{Precursor}{\includegraphics[scale=0.8]{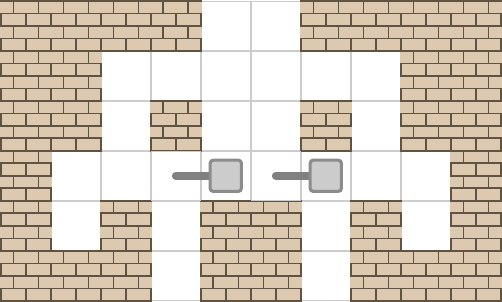}}
  \hspace{3em}
  \subcaptionbox{\label{fig:push precursor state}Icon and state diagram for precursor}{\includegraphics[scale=0.8]{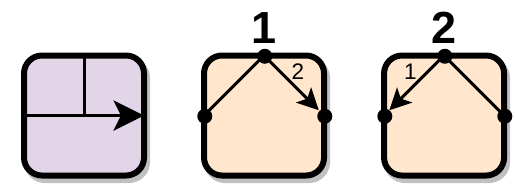}}
  \caption{Nonlocal precursor simulation for Push-1F.  As before, horizontal tracks in the proto-precursor and precursor show spaces to which blocks are magically confined.  The magical force also prevents the pair of blocks in the proto-precursor and precursor from being adjacent.}
  \label{fig:push precursor}
\end{figure}

Finally, we nonlocally simulate a self-closing door.  Our construction of a checkable self-closing door is shown in Figure~\ref{fig:push scd}.  This gadget is almost identical to a self-closing door, except that it permits a traversal from the opening location to the exit location exactly once, after which the gadget is fully untraversable.  We eliminate this problem by applying postselection with the checking traversal sequence \([\text{opening} \to \text{opening},\text{entrance} \to \text{exit}]\).  The sole broken state is the fully untraversable one arising from the aforementioned undesired traversal.  If we imagine that a magical force prevents the gadget from being left in such a state, then we obtain exactly a self-closing door.

\begin{figure}
	\centering
	\includegraphics{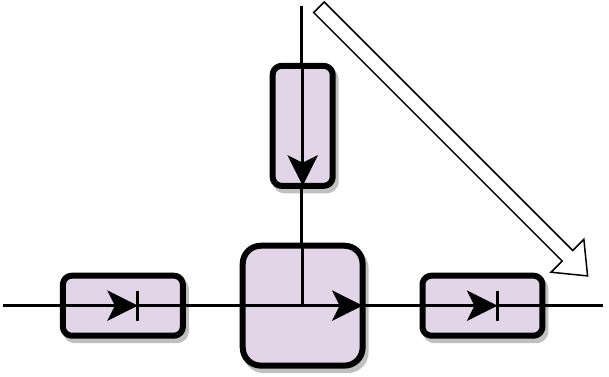}
	\caption{Checkable self-closing door for Push-1F using the precursor gadget, two diodes, and a 1-toggle.}
	\label{fig:push scd}
\end{figure}

We have demonstrated a series of planar, nonlocal gadget simulations culminating in the planar nonlocal simulation of a self-closing door.
Because planar reachability through systems of self-closing doors
is PSPACE-complete by Theorem~\ref{thm:dir-self-closing-door-pspace},
so is Push-1F.

\section{Push-$k$ for $k \ge 2$}
\label{sec:Push-k}

In this section, we show that Push-$k$ is PSPACE-complete for any $k \ge 2$,
by following the reduction to Push-1F from Section~\ref{sec:Push-1F}
but with different gadgets.
In Push-$k$, all blocks are movable in principle,
but in fact any block contained in a $(k+1) \times (k+1)$ square of blocks
can never move
(generalizing a similar observation made for Push-1 in \cite{Push100}).
In our gadget figures, we draw such blocks as fixed (bricks),
even though there is no actual distinction in the puzzle.

First we present analogs of the base gadgets from
Figure~\ref{fig:push base gadgets} for Push-$k$ for any $k \ge 2$.
Figure~\ref{fig:push-k base gadgets} shows the new constructions,
illustrated for Push-2 and Push-3 so that the pattern for general $k$ is clear.
The weak merged closing gadget is the natural generalization
of Figure~\ref{fig:push weak closing} and works the same,
while the no-return gadget is simplified relative to Figure~\ref{fig:push no return}.
(In fact, we do not need the no-return gadget for Push-$k$ with $k \geq 3$,
because we will shortly construct a diode for these puzzles.)
The weak opening permits a slightly different set of traversals than those in Figure~\ref{fig: weak opening state}, but it works for our purposes.
Out of these base gadgets, we build the SO and MSC gadgets
needed by the checkable gadgets framework
exactly as we did for Push-1 in Figure~\ref{fig:push SO and MSC}.
Thus we can use postselection.

\begin{figure}
  \centering
  \def\scale{0.5}
  \subcaptionbox{\label{fig:push-2 weak closing}\centering Weak merged closing}{\includegraphics[scale=\scale]{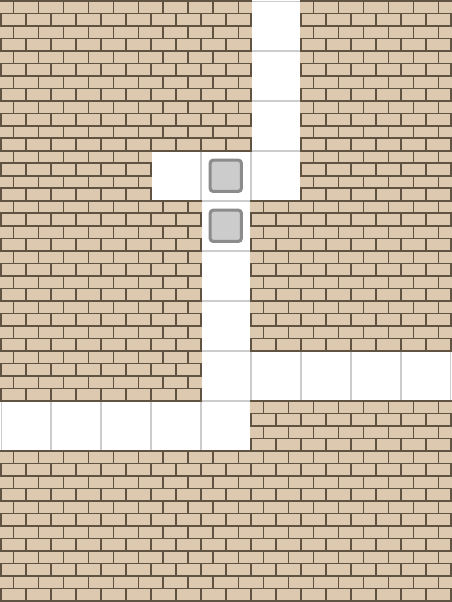}}  \hfil
  \subcaptionbox{\label{fig:push-2 no return}\centering No-return}{\includegraphics[scale=\scale]{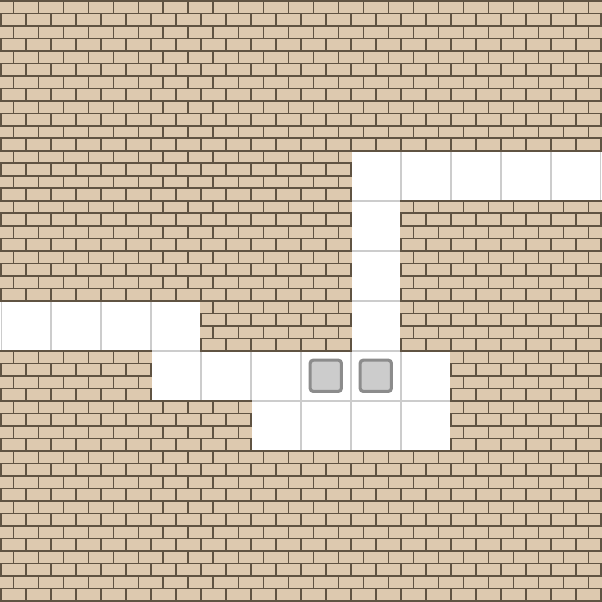}}  \hfil
  \subcaptionbox{\label{fig:push-2 weak opening}\centering Weak opening}{\includegraphics[scale=\scale]{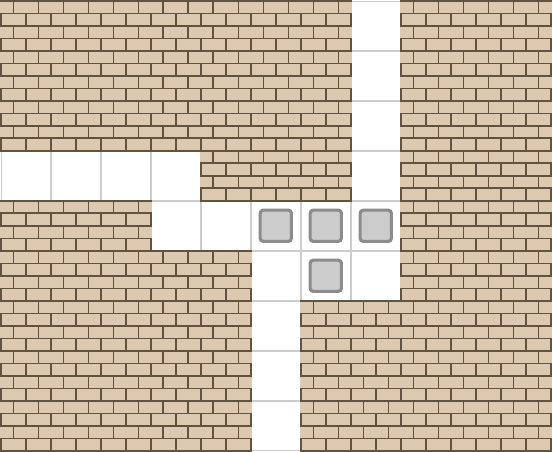}}

  \medskip
  \def\scale{0.45}
  \subcaptionbox{\label{fig:push-3 weak closing}\centering Weak merged closing}{\includegraphics[scale=\scale]{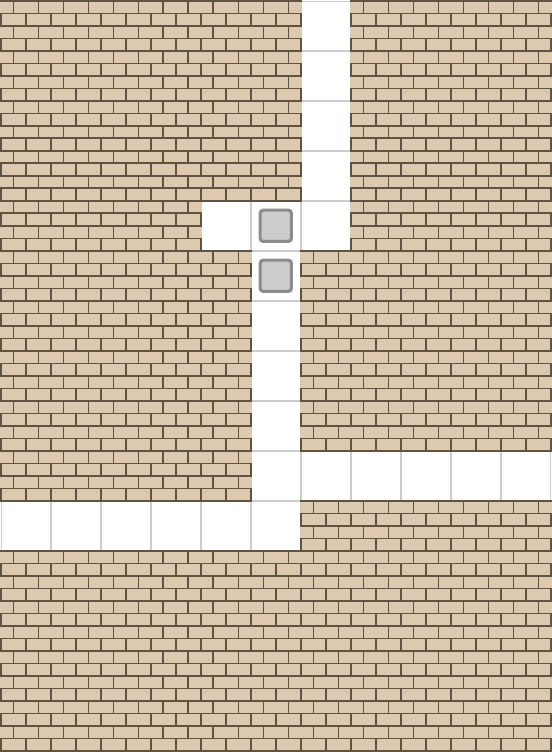}}  \hfil
  \subcaptionbox{\label{fig:push-3 no return}\centering No-return}{\includegraphics[scale=\scale]{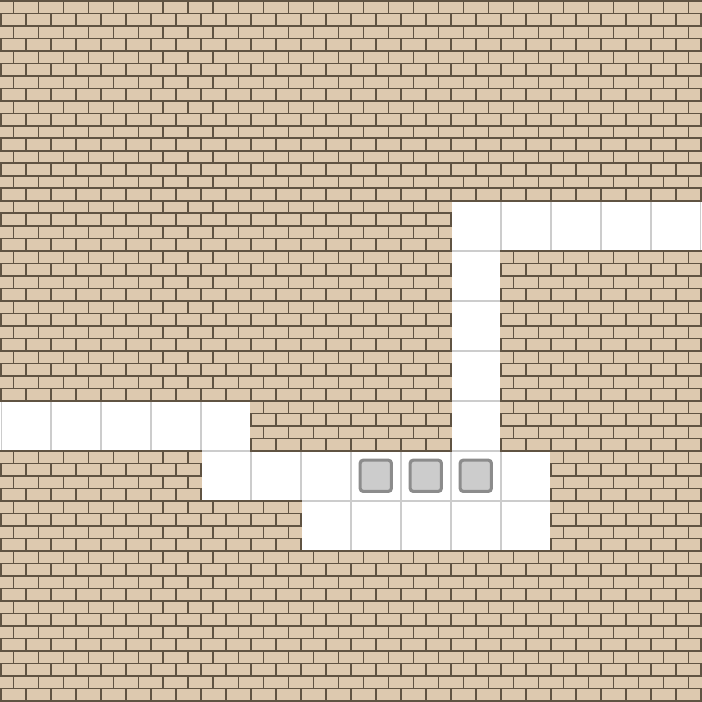}}  \hfil
  \subcaptionbox{\label{fig:push-3 weak opening}\centering Weak opening}{\includegraphics[scale=\scale]{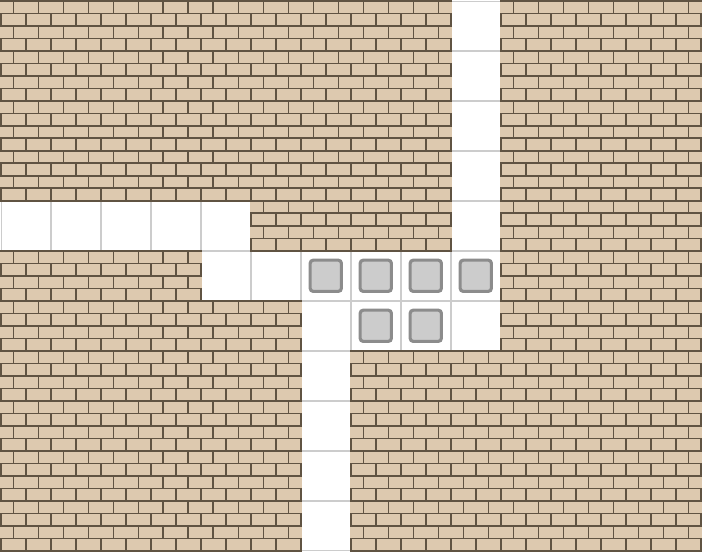}}
  \caption{Constructions of base gadgets for Push-$k$ for any $k \ge 2$,
    illustrated for Push-2 (a--c in top row) and Push-3 (d--f in bottom row).}
  \label{fig:push-k base gadgets}
\end{figure}

Next we present the two checkable gadgets that we used to build a
self-closing door: a diode and a precursor.
Figure~\ref{fig:push-k diode} shows a checkable proto-diode gadget
for Push-2 and a diode gadget for Push-3, the latter of which generalizes to arbitary $k \ge 3$.
Figure~\ref{fig:push-k precursor} shows checkable proto-precursor gadgets
for Push-2 and Push-3, the latter of which generalizes to arbitrary $k \ge 3$.
The Push-2 version uses the same checking sequence as the corresponding Push-1F gadget, whereas the version for $k \ge 3$ uses the simplified checking sequence $[B \to B, A \to B]$ which is sufficient to ensure correct behavior.

\begin{figure}
  \centering
  \subcaptionbox{Push-2 checkable proto-diode, which requires the same checking as Figure~\ref{fig:push-diode-checking}.}{  \begin{overpic}[scale=0.6]{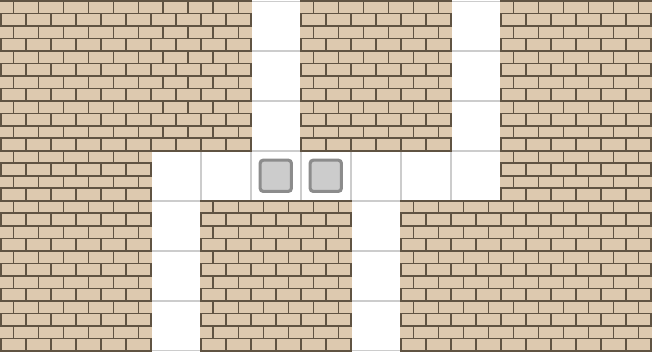}
    \put(42.5,57){\makebox(0,0){\strut A}}
    \put(73.0,57){\makebox(0,0){\strut B}}
    \put(27  ,-4){\makebox(0,0){\strut C}}
    \put(57.8,-4){\makebox(0,0){\strut D}}
  \end{overpic}
  \vspace{1em}}  \hfill
  \subcaptionbox{Push-3 diode, which generalizes to Push-$k$ for any $k \geq 3$ by duplicating the row and column marked $\star$.}{  \begin{overpic}[scale=0.6]{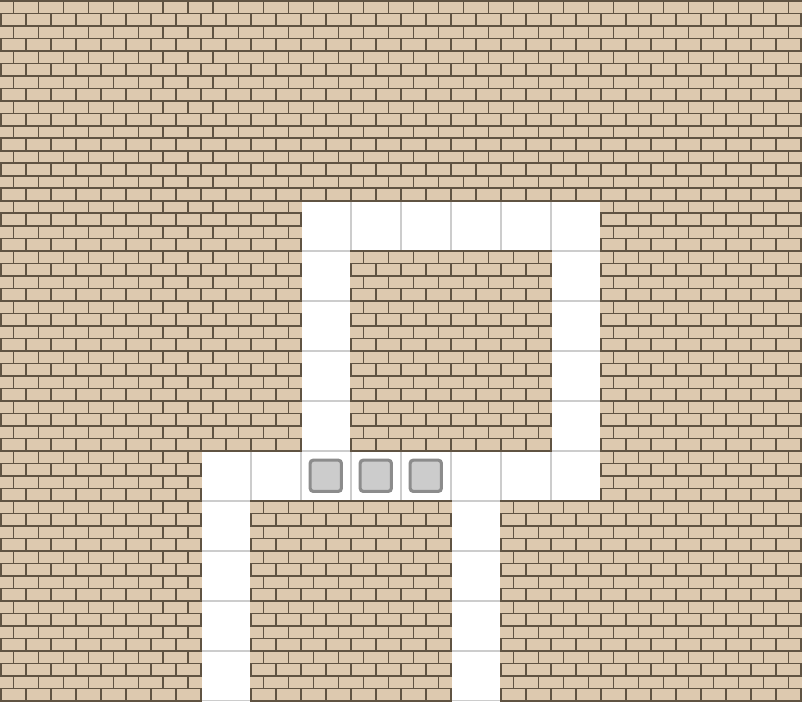}
	\put(-3,47){\makebox(0,0){\strut $\star$}}
	\put(53.25,-3){\makebox(0,0){\strut $\star$}}
    \end{overpic}
  \vspace{1em}}
  \caption{Diode gadget for Push-$k$ for any $k \ge 2$.}
  \label{fig:push-k diode}
\end{figure}

\begin{figure}
  \centering
  \vspace{1em}
  \subcaptionbox{Push-2 checkable proto-precursor, which requires the same checking as Figure~\ref{fig:push-precursor-checking}.}{  \begin{overpic}[scale=0.45]{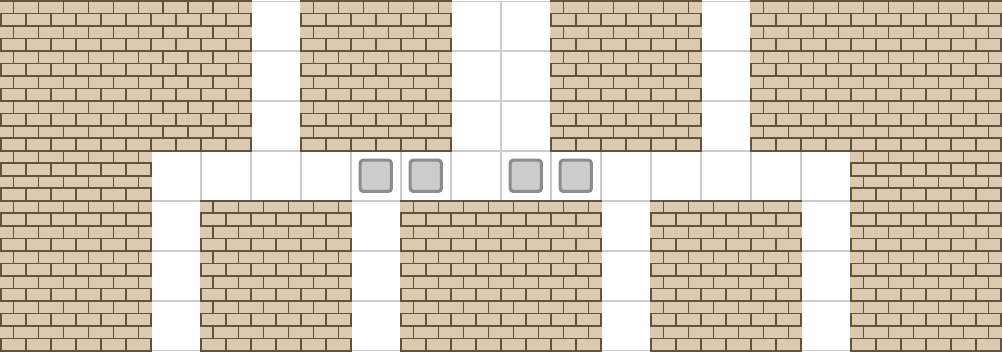}
	  \put(27.5,38){\makebox(0,0){\strut A}}
	  \put(50  ,38){\makebox(0,0){\strut B}}
	  \put(72.5,38){\makebox(0,0){\strut C}}
	  \put(17.5,-4){\makebox(0,0){\strut D}}
	  \put(37.5,-4){\makebox(0,0){\strut E}}
	  \put(62.5,-4){\makebox(0,0){\strut F}}
	  \put(82.5,-4){\makebox(0,0){\strut G}}
  \end{overpic}
  \vspace{1em}}  \hfill
  \subcaptionbox{Push-3 checkable proto-precursor, with checking traversals $[B \to B, A \to B]$.  Generalizes to Push-$k$ for any $k \geq 3$ by duplicating the columns marked $\star$.}{  \begin{overpic}[scale=0.45]{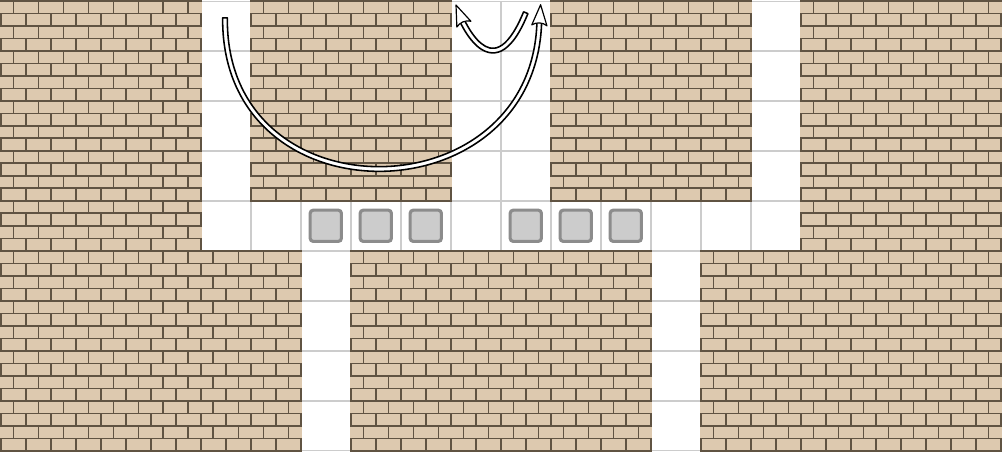}
	  \put(22.5,48){\makebox(0,0){\strut A}}
	  \put(50  ,48){\makebox(0,0){\strut B}}
	  \put(77.5,48){\makebox(0,0){\strut C}}
	  \put(32.5,-4){\makebox(0,0){\strut E}}
	  \put(67.5,-4){\makebox(0,0){\strut F}}
	  \put(42.5,-3){\makebox(0,0){\strut $\star$}}
	  \put(57.5,-3){\makebox(0,0){\strut $\star$}}
  \end{overpic}
  \vspace{1em}}
  \caption{Precursor gadget for Push-$k$ for any $k \ge 2$.}
  \label{fig:push-k precursor}
\end{figure}

Using these gadgets, we can build a self-closing door exactly as we did for Push-1F in Figure~\ref{fig:push scd}.
Therefore Push-$k$ is PSPACE-complete for any $k \ge 2$.

\section{Open Problems}
\label{sec:Open Problems}

\definecolor{header}{rgb}{0.29,0,0.51}
\definecolor{gray}{rgb}{0.85,0.85,0.85}
\definecolor{hard}{rgb}{1,0.85,0.85}
\definecolor{open}{rgb}{1,1,0.85}
\definecolor{easy}{rgb}{0.85,0.85,1}
\def\header#1{\multicolumn{1}{c}{\cellcolor{header}\textcolor{white}{\textbf{#1}}}}
\def\OPEN{\cellcolor{open}\em OPEN}
\def\HARD{\cellcolor{hard}}

Among the Push-$k$(F) family puzzles, as summarized in Table~\ref{tab:push},
just two cases remain open: Push-1 and Push-$*$ without fixed blocks.
As in our Push-$k$ proofs of Section~\ref{sec:Push-k},
Push-1 can easily simulate fixed blocks using $2 \times 2$ arrangements of
movable blocks, so we ``only'' need to make all fixed areas two blocks thick.
Our constructions of the gadgets SO and MSC needed to apply postselection
all use two-block thick spacing, so we have shown that postselection is
available for Push-1 gadgets.
Unfortunately, our postselected constructions for Push-1F critically use one-block-thick spacing.
With Push-$*$, it seems difficult to even build gadgets that last for more
than one use.
Both Push-1 and Push-$*$ are known to be NP-hard
\cite{Push100,hoffmann-2000-pushstar,demaine2003pushing}.
We conjecture that Push-1 is PSPACE-complete and Push-$*$ is NP-complete.

\begin{table}
  \centering
  \small
    \begin{tabular}{|l|c|c|c|}
    \multicolumn{1}{l|}{}
    & \header{Push-1} & \header{Push-$k$ for $k \geq 2$} & \header{Push-$*$} \\
    \hline
    \header{No fixed blocks} & \OPEN & \HARD PSPACE-complete [\S\ref{sec:Push-k}] & \OPEN \\
    \hline
    \header{Fixed blocks (F)} & \HARD PSPACE-complete [\S\ref{sec:Push-1F}] & \HARD PSPACE-complete [\S\ref{sec:Push-k}] & \HARD PSPACE-complete \cite[\S\ref{sec:Push-*F}]{bremner1994motion} \\
    \hline
  \end{tabular}
  \caption{Summary of known complexity results for Push-$k$(F) puzzles.}
  \label{tab:push}
\end{table}

In the $\cdots$Dude puzzle family,
the main open problem is the related block storage question, named $\cdots$Duderino in \cite{barr2021block}, in which the blocks have target locations to occupy. This is comparable to the difference between Push-1F and Sokoban. It is generally expected that the storage version of block-pushing puzzles is at least as hard as reaching a single goal location; however, this result does not directly follow. We believe using the reconfiguration version of the gadgets framework from \cite{DBLP:conf/walcom/AniDDHL22} may help build a gadget-based proof.

Another open question relates to the technique of postselected gadgets. When defining a postselected gadget, we only specified a single traversal sequence to be checked. It seems likely that one could enforce the choice of one of several possible sequences using more complex constructions like those found in the SAT reduction for DAG gadgets in \cite{demaine2018general}. Are there cases where this sort of flexibility is useful?

\section*{Acknowledgments}

This work was initiated during extended problem solving sessions
with the participants of the MIT class on
Algorithmic Lower Bounds: Fun with Hardness Proofs (6.892)
taught by Erik Demaine in Spring 2019.
We thank the other participants for their insights and contributions.

We would like to thank our reviewers for their detailed and useful feedback.

We would like to thank Aaron Williams for useful discussion including how to restructure the paper and how to better present the results and checkable gadget framework.

We would like to thank the authors of \cite{bremner1994motion} for permission
to include versions of their gadgets in Section~\ref{sec:Push-*F}.

Figures produced using SVG Tiler (\url{https://github.com/edemaine/svgtiler}),
diagrams.net, and Inkscape.

\bibliographystyle{alpha}
\bibliography{thebib}

\appendix
\magicappendix

\end{document}